\documentclass[12pt]{article}

\usepackage{amsmath, amssymb, amsthm}
\usepackage[margin=1in]{geometry}
\usepackage{graphicx}
\graphicspath{{outputs/danishmulti_allocation_results/}{danishmulti_allocation_results/}}
\usepackage{float}
\usepackage{xcolor}
\usepackage[round]{natbib}

\usepackage{booktabs,array}
\newcolumntype{P}[1]{>{\raggedright\arraybackslash}p{#1}}

\usepackage[colorlinks=true, linkcolor=blue, citecolor=blue, urlcolor=blue]{hyperref}

\newcommand{\licx}{\preceq_{\mathrm{icx}}}

\newcommand{\ind}{\mathbf{1}} 

\newcommand{\R}{\mathbb{R}}
\newcommand{\Rn}{\mathbb{R}^+}

\newcommand{\E}{\mathbb{E}}
\newcommand{\PP}{\mathbb{P}}
\newcommand{\dd}{\mathop{}\!\mathrm{d}} 

\newcommand{\ep}{\varepsilon}

\newcommand{\Jmarg}{\mathcal J_{\mathrm{marg}}}
\newcommand{\Jworst}{\mathcal J_{\mathrm{worst}}}

\newenvironment{practicalrevision}{}{}
\newenvironment{liquidityrevision}{}{}
\newenvironment{bluerevision}{}{}

\theoremstyle{plain} 
\newtheorem{Lemma}{Lemma}[section]

\newtheorem{Proposition}[Lemma]{Proposition}

\newtheorem{Theorem}[Lemma]{Theorem}

\theoremstyle{definition} 

\newtheorem{Example}[Lemma]{Example}
\newtheorem{Remark}[Lemma]{Remark}

\renewenvironment{proof}[1][Proof]{\noindent\textbf{#1.} }{\hfill$\square$\par\vspace{2mm}}

\numberwithin{equation}{section}
\allowdisplaybreaks 

\title{\Large \textbf{On the Expected Maximum Deficit \\ \vspace{1mm} and the Optimal Allocation of Reserves}}

\author{
\textbf{Claude Lef\`evre}\thanks{Universit\'e Libre de Bruxelles, D\'epartement de Math\'ematique, Campus de la Plaine C.P. 210, B-1050 Bruxelles, Belgium. Email: \texttt{claude.lefevre@ulb.be}}
\and 
\textbf{Pierre Zuyderhoff}\thanks{University of Nottingham, Department of Mathematical Sciences, 199 Taikang East Road, Yinzhou District, Ningbo, China. Email: \texttt{z2022170@nottingham.edu.cn}}
}
\date{}

\begin{document}

\maketitle

\vspace{2mm}
\begin{abstract}
\begin{liquidityrevision}
\noindent Let $L=(L_s)_{0\le s\le t}$ be a cumulative net-loss process and let $M_t=\sup_{0\le s\le t}L_s$. For a candidate reserve $u$ and a distortion function $g$, define
$\mathcal D_g^{(t)}(u)=\int_u^\infty g(\PP(M_t>v))\dd v$.
\begin{bluerevision}
This function measures the tail-weighted residual severity of the largest cumulative loss over the horizon. We derive three monetary risk measures: its value at zero reserve and two measures based on fixed and proportional deficit tolerances. Under a concave distortion, the zero-reserve and proportional-tolerance measures are coherent, while the fixed-tolerance measure is convex. Given a reserve budget distributed among several units, we analyze two allocation criteria: the sum of the units' distorted deficits, which depends only on their marginal distributions, and a distorted expectation of the largest residual unit deficit, which uses their joint distribution. We characterize the optimal allocations and corresponding near-optimal sets. A two-unit exponential benchmark shows that the optimum can be unique and unequal. We then apply both criteria to the Building, Contents and loss-of-Profits components of the public Danish large-fire data. Under the baseline predictive model, both yield interior allocations and localized 0.5\%-optimal regions, while sensitivity analyses illustrate model uncertainty. We also provide Monte Carlo estimators, conditional counterparts and a review-date allocation formulation.
\end{bluerevision}
\end{liquidityrevision}
\end{abstract}

\vspace{4mm}
\noindent
\textbf{Keywords:} Path-dependent risk measure, reserve insufficiency, internal capital, liquidity risk, expected maximum deficit, reserve allocation.


\section{Introduction}
\label{sec:intro}

\begin{liquidityrevision}
An insurer's cumulative net loss evolves throughout a reporting period, and its most adverse value need not occur at the reporting horizon. Reserve management may therefore depend on the largest loss reached over the period rather than on the terminal loss alone. This paper develops monetary risk measures based on the expected maximum deficit (EMD) and studies the allocation of a given reserve budget among allocation units, such as business lines or coverage groups.

Fix a horizon $t$. At time $s\le t$, let $L_s$ denote cumulative net loss since the beginning of the horizon. Depending on the application, $L_s$ may represent accumulated underwriting net loss, the decline in economic net assets from their initial value, or cumulative dated cash outflows less cash inflows. Define
\[
M_t=\sup_{0\le s\le t}L_s.
\]
For a candidate reserve $u$, the residual maximum deficit is $(M_t-u)_+$, where $x_+=\max\{x,0\}$. It is zero when the reserve covers the largest cumulative net loss and otherwise records the remaining shortfall. Because $t$ is included in the supremum, the terminal loss is part of the criterion, but the maximum may be attained earlier. EMD records the size of the maximum, not when or why it occurs.

We study the expected deficit and distorted versions. A \emph{distortion function} is a nondecreasing map $g:[0,1]\to[0,1]$ satisfying $g(0)=0$ and $g(1)=1$; it is applied here to exceedance probabilities. The distorted EMD is
\[
\mathcal D_g^{(t)}(u)=\int_u^\infty g\!\left(\PP(M_t>v)\right)\dd v.
\]
Concavity implies $g(p)\ge p$, so the distorted EMD gives at least as much weight to tail probabilities as the ordinary EMD, obtained from the identity distortion.

\begin{bluerevision}
The distorted EMD should first be viewed as the reserve-indexed function $u\mapsto\mathcal D_g^{(t)}(u)$ rather than as a single risk-measure value. Three risk-measure values are obtained from this function. The zero-reserve value is its evaluation at $u=0$ and introduces no tolerance parameter. The fixed-tolerance value is the smallest capital adjustment that reduces the residual deficit below a prescribed monetary allowance. The proportional-tolerance value is defined similarly, but its allowance scales with the candidate capital. Even with the same distortion, two loss processes can have the same zero-reserve value but different reserve-indexed functions, so their tolerance-based capital adjustments can differ. The distortion governs the weighting of exceedance probabilities, whereas the monetary and proportional tolerances express risk appetite and are not identified from the loss data or the resources already held.
\end{bluerevision}

\begin{bluerevision}
Throughout Sections~\ref{sec:deficit}--\ref{sec:allocation}, the candidate reserve $u$ is a deterministic amount subtracted from the loss process. It is not automatically an accounting technical provision, regulatory capital requirement or liquidity holding. Its operational meaning depends on how $L$ is specified. We develop three interpretations. For an occurrence-based underwriting process, EMD measures the largest cumulative underwriting net loss and provides an internal reserve-management criterion. For an economic balance-sheet process, it measures the largest deterioration in net assets and may inform internal economic-capital management. For dated cash flows, it measures the largest cumulative payment--receipt gap and has a liquidity interpretation. In every case, a further management policy is needed to translate any of these three risk-measure values into resources to be held or raised.
\end{bluerevision}

Related actuarial literature places EMD among ruin-severity criteria. Classical ruin theory supplements ruin probability with measures of severity, including expected discounted penalty functions \citep{GerberEtAl1987,GerberShiu1998} and maximum-severity quantities \citep{Dickson1998,ThampiEtAl2007,LiLu2013}. \citet{TrufinEtAl2011} derive a monetary requirement from a bound on ruin probability. \citet{Loisel2005} develops differentiation formulas for risk-process functionals and their allocation, while \citet{LoiselTrufin2014} use the expected area in red to control the time-integrated negative surplus and allocate a company-wide risk limit. The area in red combines the depth and duration of every period below zero. EMD uses a different temporal aggregation: it measures the residual severity of the worst shortfall on a path. The expected area in red and EMD therefore describe complementary aspects of the surplus trajectory. Finite-horizon EMD was considered by \citet{Jiang2015}, and \citet{LefevreEtAl2017} established stochastic comparisons in the classical risk model.

The paper also builds on monetary risk measures for stochastic processes. \citet{CheriditoEtAl2005} provide a general framework for coherent and convex risk measures on paths. \citet{ShimizuTanaka2018} formulate Gerber--Shiu process risk measures in an asset--liability management and going-concern solvency setting, where a barrier breach has a balance-sheet interpretation. Here, distortion risk measurement is applied to the running maximum of a cumulative net-loss process. Related conditional foundations appear in \citet{FollmerPenner2006} and \citet{BieleckiEtAl2025}.

A dated-cash-flow specification connects EMD with liquidity-risk measures based on financing constraints or stressed cash resources \citep{AcerbiScandolo2008,SkoglundChen2012,ContEtAl2020}. In that specification, EMD measures the residual depth of the largest cumulative net cash outflow. Liquidity remains one interpretation of distorted EMD, not its defining application here.

\begin{bluerevision}
We next study how to split an exogenous internal reserve budget among allocation units. OR-based ruin allocation, in which at least one unit breaches its reserve, is studied by \citet{Loisel2004}; \citet{GongEtAl2012} consider both OR and AND criteria, the latter requiring every unit to breach, possibly at different times. Adaptive subset-ruin constraints are studied by \citet{DelsingEtAl2019}. As a criterion for failure of the pooled portfolio, OR ruin may be conservative when an isolated unit breach can be met by central resources or subsequent reallocation, whereas AND ruin may be permissive because pooled ruin can occur without every unit breaching. This motivates severity-sensitive criteria rather than reliance on either binary extreme. Method~1 minimizes the sum of the unit EMDs and uses only their marginal laws. Method~2 minimizes the distorted expected largest residual unit deficit; equivalently, it integrates distorted OR probabilities over all additional deficit levels and therefore uses the joint law. Neither Method~1 nor Method~2 replaces a separate assessment of pooled adequacy. They also differ from Euler or Aumann--Shapley contribution decompositions of an aggregate risk-measure value \citep{Tsanakas2009,DhaeneEtAl2012}: the total budget is given and the optimization determines its split. Identifying that budget with a prescribed solvency-capital amount, if desired, is an external decision.
\end{bluerevision}

\begin{bluerevision}
Against this background, the paper's main contributions are the fixed- and proportional-tolerance risk measures and the characterizations of the allocations selected by Methods~1 and~2. For a concave distortion, the measure obtained at zero reserve is coherent, as is the proportional-tolerance measure. The fixed-tolerance measure is convex. The paper also gives stochastic-comparison results under explicit ordering assumptions and discusses the interpretation of all three risk-measure values. Finally, a Monte Carlo plug-in estimator based on simulated predictive paths estimates the three risk-measure values and both allocation objectives without requiring a closed-form ruin probability, while leaving their dependence on the predictive model visible.
\end{bluerevision}

For the empirical illustration, we use the public Danish fire-loss data \texttt{danishmulti}, which contain dated losses divided into Building, Contents and loss-of-Profits components \citep{EmbrechtsEtAl1997,DutangCharpentier2026}. We use 1980--1988 for model specification and retain 1989--1990 for predictive diagnostics. The annual event count is represented by a Poisson--gamma mixture, while complete event records are resampled to preserve occurrence timing and common-event dependence. Because the data contain no premiums, the loss paths use a fitted expected-cost funding schedule for the observed large-loss layer. The fitted Method~1 and Method~2 objectives have interior allocations and localized near-optimal regions, while funding and dependence sensitivities prevent interpreting them as observed or prescribed reserve allocations. A short final section studies updating when new information arrives; dynamic control and recursive time consistency remain outside its scope.

\begin{bluerevision}
Section~\ref{sec:deficit} defines EMD and its distorted form. Section~\ref{sec:riskmeasures} derives the zero-reserve running-maximum, fixed-tolerance and proportional-tolerance risk measures, and Section~\ref{sec:allocation} studies Methods~1 and~2. Section~\ref{sec:implementation} gives the actuarial interpretations, Monte Carlo estimator and three-coverage fire-loss illustration. Section~\ref{sec:dynamic} studies conditional versions of the three risk measures and review-date reallocation. Compound-Poisson arrivals and exponential severities are used as analytical benchmarks; the empirical implementation instead combines a mixed-Poisson count with jointly resampled event records.
\end{bluerevision}
\end{liquidityrevision}


\section{Mathematical framework and the expected maximum deficit}
\label{sec:deficit}

We work on a filtered probability space $(\Omega,\mathcal F,(\mathcal F_s)_{s\ge0},\PP)$ satisfying the usual conditions and supporting regular conditional distributions. Processes used for infinite-horizon limits are defined on $[0,\infty)$; for a fixed finite horizon $t$, write $\mathcal H^0_t$ for their restrictions to $[0,t]$. We assume $\mathcal F_0$ is trivial up to null sets, so the initial state of an adapted process is deterministic. Given $a,b\in\R$, $aX+b$ denotes the process $(aX_s+b)_{0\le s\le t}$. To retain the notation used below, set $\mathcal H^0:=\mathcal H^0_t$, i.e.,
\[
\mathcal{H}^0=\big\{X: [0,t]\times\Omega\to\R \mid X \text{ is c\`adl\`ag and } \mathcal{F}_s\text{-adapted,} \, 0\leq s\leq t\big\}.
\]
The subspace of bounded processes is defined as
\[
\mathcal{H}^\infty=\left\{X\in\mathcal{H}^0 \;\middle|\; \left\|\sup_{0\le s\le t}|X_s|\right\|_\infty<\infty\right\}.
\]

Consider a continuous-time insurance risk process $R \in \mathcal{H}^0$ within this space for which $R_s$ represents the surplus at time $s$, with an initial reserve $R_0 = u \in \R$. The overall net loss up to time $s$ is measured by the random variable
\begin{equation}\label{eq:netloss}
L_s=u-R_s, \quad s\geq 0,
\end{equation}
with $L_0=0$. In particular, for the classical compound Poisson risk model,
\begin{equation}\label{eq:poisson}
L_s=\sum_{i=1}^{N_s} Y_i - cs,
\end{equation}
where $\{N_s, \, s\in \Rn\}$ is a Poisson process counting claim arrivals up to time $s$, the $Y_i$ are successive i.i.d.\ claim amounts (distributed as $Y$), and $c$ is the constant premium rate.

By the definition of $L_s$ in \eqref{eq:netloss}, the maximum overall net loss up to time $t$ is
\begin{equation}\label{eq:maxloss}
M_t=\sup\{L_s, \, 0\le s\le t\},
\end{equation}
and the ruin probability over $(0,t)$, denoted $\psi_t(u)$, is simply
\begin{equation}\label{eq:ruinprob}
\psi_t(u)=\PP(M_t>u) \quad (\text{which is also denoted } {\bar F}_{M_t}(u)).
\end{equation}

Our primary focus is the \emph{expected maximum deficit up to time $t$}, which combines the probability and depth of an exceedance. It is defined by
\begin{equation}\label{eq:deficit}
\mathcal{D}^{(t)}(u) := \E\big[(M_t-u)_+\big]=\int_u^\infty (m-u)\dd\PP(M_t\le m).
\end{equation}
After a simple integration by parts, \eqref{eq:deficit} can be rewritten as
\begin{equation}\label{eq:stoploss}
\mathcal{D}^{(t)}(u)=\int_u^\infty \psi_t(v)\dd v,
\end{equation}

As a function of $u$, this is the stop-loss transform of the running maximum $M_t$, expressed through its ruin-probability tail. Its connection with stop-loss and increasing-convex order is classical \citep{Muller1996,MullerStoyan2002}; infinite-horizon transforms of ruin probabilities are studied, for example, by \citet{ChengPai2003} and \citet{Tsai2006}.

The proposition below ensures that $\mathcal{D}^{(t)}(u)$ is consistent with the increasing convex ordering $\licx$ on claim amounts, meaning it penalizes risks that are larger and more variable.

\begin{Proposition}\label{prop:lefevreetal}
Let $R_s^{(j)}$, $j=1,2$, be two risk processes as in \eqref{eq:poisson}, each characterized by a Poisson intensity $\lambda_j$, claim amounts distributed as $Y^{(j)}$, and premium rate $c_j$. If $\lambda_1\le \lambda_2$ and $c_1\ge c_2$, then $Y^{(1)} \licx Y^{(2)}$ implies
\begin{equation}\label{eq:deficit_ineq}
\mathcal{D}_1^{(t)}(u)\le \mathcal{D}_2^{(t)}(u), \quad t, u\geq 0.
\end{equation}
\end{Proposition}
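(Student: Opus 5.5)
Since $\mathcal D^{(t)}_j(u)=\E[(M^{(j)}_t-u)_+]$ is the stop-loss transform of the running maximum, \eqref{eq:deficit_ineq} for all $u$ is exactly the statement $M^{(1)}_t\licx M^{(2)}_t$. We therefore aim to prove this ordering for every fixed $t\ge0$. The plan proceeds in three monotone steps, each changing one parameter: move from $(\lambda_1,Y^{(1)},c_1)$ to $(\lambda_1,Y^{(2)},c_1)$, then to $(\lambda_2,Y^{(2)},c_1)$, then to $(\lambda_2,Y^{(2)},c_2)$. We use transitivity of $\licx$ at the end.

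\emph{Step 1 (claim sizes).} We condition on the Poisson process $N$, so the arrival times $T_1<T_2<\cdots$ are fixed. The running maximum of $L_s=\sum_{i\le N_s}Y_i-cs$ over $[0,t]$ is attained at $s=0$ or just after a claim arrival, since $L$ decreases between jumps. Hence
\[
M_t=\max\Bigl\{0,\;\max_{1\le k\le N_t}\Bigl(\sum_{i=1}^k Y_i-cT_k\Bigr)\Bigr\}.
\]
For fixed arrival times this is a componentwise nondecreasing and convex function of $(Y_1,\dots,Y_{N_t})$, being a maximum of affine functions with nonnegative coefficients. Independent coordinates ordered in $\licx$ yield $\phi(Y^{(1)}_1,\dots)\licx\phi(Y^{(2)}_1,\dots)$ for every increasing convex $\phi$. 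This is the standard closure of $\licx$ under increasing convex functions of independent components, proved by replacing one coordinate at a time and conditioning on the others. We then integrate over $N$, since $\licx$ is closed under mixtures with a common mixing law. This step is routine.

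\emph{Step 2 (intensity).} We realize $N^{(2)}=N^{(1)}+\tilde N$, where $\tilde N$ is an independent Poisson process of rate $\lambda_2-\lambda_1$ with independent claims. The aggregate claim process then increases pathwise, $S^{(2)}_s\ge S^{(1)}_s$ for all $s$, so $M_t$ increases pathwise. Pathwise dominance implies the usual stochastic order and hence $\licx$. \emph{Step 3 (premium).} $L_s$ is pathwise nonincreasing in $c$, so decreasing $c_1$ to $c_2$ increases $M_t$ pathwise. In both steps a simple coupling is enough.

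The main obstacle is Step 1. We must check carefully that the $\licx$ closure property applies with a random number $N_t$ of arguments and random weights $T_k$. We handle this by conditioning on $(N_t,T_1,\dots,T_{N_t})$, which is independent of the claims and has the same law in both models after Step 1 keeps $\lambda_1$ fixed. Within the conditioning we apply the coordinate-by-coordinate replacement argument, using Fubini and the fact that a function increasing and convex in each coordinate separately, with the others fixed, preserves $\licx$ in that coordinate. Chaining the three steps gives $M^{(1)}_t\licx M^{(2)}_t$ and therefore \eqref{eq:deficit_ineq} for all $t,u\ge0$. This is consistent with the comparison result of \citet{LefevreEtAl2017} in the classical risk model.
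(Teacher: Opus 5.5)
Your argument is correct. It shares the paper's outer structure: first establish $M_t^{(1)}\licx M_t^{(2)}$, then obtain the inequality because $x\mapsto(x-u)_+$ is nondecreasing and convex. The difference is that the paper obtains the ordering of the running maxima by citing \citet{LefevreEtAl2017}, whereas you prove it directly by changing one parameter at a time and chaining the comparisons by transitivity.

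Your route buys transparency. It shows that $\lambda_1\le\lambda_2$ and $c_1\ge c_2$ enter only through pathwise couplings, so those two steps even give the usual stochastic order. The $\licx$ hypothesis on claims enters only once, through the representation of $M_t$, given the arrival times, as a maximum of affine, coordinatewise nondecreasing functions of independent claims.

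It also brings to the surface two standing assumptions of the classical model that the citation leaves implicit. First, $c\ge0$ ensures the supremum is attained at $s=0$ or at a claim instant. Second, nonnegative claims are needed so that superposing the extra Poisson stream in Step~2 increases the aggregate claims pathwise; the marks of that extra stream should be drawn from $Y^{(2)}$ so the superposition has the law of the intermediate model. Both assumptions hold in the setting of \eqref{eq:poisson}, so there is no gap.

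The paper's one-line proof is shorter, but it rests entirely on the external reference.
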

\begin{proof}
Under the stated assumptions, \citet{LefevreEtAl2017} established that $M_t^{(1)} \licx M_t^{(2)}$ for $t\geq 0$. Therefore, inequality \eqref{eq:deficit_ineq} follows from \eqref{eq:stoploss} with \eqref{eq:ruinprob} and the property that the standard stop-loss transform preserves the order $\licx$.
\end{proof}

To account for tail risk aversion, we generalize the expected maximum deficit by evaluating it under a non-decreasing distortion function $g: [0,1] \to [0,1]$ with $g(0)=0$ and $g(1)=1$. This tail transformation follows the distorted-premium and distortion-risk-measure literature \citep{Wang1996,WangYoungPanjer1997,WirchHardy2002}. For a general random variable $Z$, the distorted expectation is defined via the Choquet integral as
\begin{equation}\label{eq:choquet}
\E_g[Z] = \int_0^\infty g\Big(\PP(Z > x)\Big)\dd x + \int_{-\infty}^0 \Big[g\Big(\PP(Z > x)\Big) - 1\Big]\dd x.
\end{equation}
We define the distorted expected maximum deficit up to time $t$ as $\mathcal{D}_g^{(t)}(u) := \E_g\big[(M_t-u)_+\big]$. Because the positive part $(M_t-u)_+$ is a non-negative random variable, $\PP\big((M_t-u)_+ > x\big) = 1$ for all $x < 0$. Since $g(1) = 1$, the integrand in the negative domain evaluates to zero, and the Choquet integral reduces to:
\begin{equation}\label{eq:distorted_deficit}
\mathcal{D}_g^{(t)}(u) = \int_0^\infty g\Big(\PP\big((M_t-u)_+ > x\big)\Big)\dd x.
\end{equation}
Applying the change of variable $v = u + x$ and recalling \eqref{eq:ruinprob}, this simplifies to the distorted stop-loss transform:
\begin{equation}\label{eq:distorted_stoploss}
\mathcal{D}_g^{(t)}(u) = \int_u^\infty g\big(\psi_t(v)\big)\dd v.
\end{equation}
When $g(x)=x$, $\mathcal{D}_g^{(t)}(u)$ reduces to the standard expected maximum deficit $\mathcal{D}^{(t)}(u)$. 

\begin{liquidityrevision}
\begin{Remark}[Comparison with terminal loss]\label{rem:terminal_intrahorizon}
Because $M_t\ge L_t$ pathwise, monotonicity of the positive part and of the distorted expectation gives
\begin{equation}\label{eq:terminal_deficit_dominance}
\mathcal D_g^{(t)}(u)
=\E_g[(M_t-u)_+]
\ge \E_g[(L_t-u)_+]
=\int_u^\infty g\!\left(\PP(L_t>v)\right)\dd v.
\end{equation}
The event identity
\begin{equation}\label{eq:terminal_intrahorizon_split}
\{M_t>u\}=\{L_t>u\}\,\dot\cup\,\{M_t>u, L_t\le u\}
\end{equation}
can help interpret an exceedance. It is not a decomposition of the distorted deficit: $\mathcal D_g^{(t)}$ depends on the loss process only through the distribution of $M_t$. Consequently, two processes with the same running-maximum distribution receive the same EMD even if their terminal behavior differs. Distinguishing whether a maximum persists to the horizon requires additional information on $L_t$ or on the joint law of $(M_t,L_t)$.
\end{Remark}
\end{liquidityrevision}

\begin{Lemma}[Basic properties of the distorted deficit]\label{lem:deficit_properties}
Assume $\mathcal D_g^{(t)}(0)<\infty$ and let $g$ be any nondecreasing distortion. Then $u\mapsto\mathcal D_g^{(t)}(u)$ is nonincreasing, convex, continuous and $1$-Lipschitz. It is differentiable almost everywhere, with
\begin{equation}\label{eq:deficit_derivative}
\frac{\dd}{\dd u}\mathcal D_g^{(t)}(u)=-g\bigl(\psi_t(u)\bigr).
\end{equation}
Moreover, since $L_0=0$ and hence $M_t\ge0$,
\begin{equation}\label{eq:negative_linear}
\mathcal D_g^{(t)}(u)=\mathcal D_g^{(t)}(0)-u,\qquad u\le0.
\end{equation}
For the identity distortion and every $u$ such that $\psi_t(u)>0$,
\begin{equation}\label{eq:prob_severity}
\mathcal D^{(t)}(u)=\psi_t(u)\,\E[M_t-u\mid M_t>u].
\end{equation}
Strict decrease is not a general property; it holds on intervals on which $\psi_t(u)>0$ and $g(p)>0$ for $p>0$.
\end{Lemma}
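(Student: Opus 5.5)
The plan is to work directly from the representation \eqref{eq:distorted_stoploss}, $\mathcal D_g^{(t)}(u)=\int_u^\infty h(v)\dd v$ with $h(v):=g(\psi_t(v))$. Since $\psi_t$ is nonincreasing and right-continuous and $g$ is nondecreasing with values in $[0,1]$, the integrand $h$ is nonincreasing and satisfies $0\le h\le 1$. The assumption $\mathcal D_g^{(t)}(0)<\infty$ makes $h$ integrable on $[0,\infty)$. On $(-\infty,0)$ we have $\psi_t(v)=1$ because $M_t\ge L_0=0$, so $h\equiv g(1)=1$ there. Hence $\mathcal D_g^{(t)}(u)$ is finite for every real $u$. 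For $u\le 0$ it equals $\int_u^0 1\dd v+\mathcal D_g^{(t)}(0)=\mathcal D_g^{(t)}(0)-u$, which is \eqref{eq:negative_linear}.

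\textbf{Monotonicity, Lipschitz continuity and convexity.} For $u<w$,
\[
\mathcal D_g^{(t)}(u)-\mathcal D_g^{(t)}(w)=\int_u^w h(v)\dd v\in[0,\,w-u].
\]
This single identity gives both the nonincreasing property and the $1$-Lipschitz bound, and continuity follows from the latter. Convexity follows because the map is an indefinite integral of $-h$, and $-h$ is nondecreasing. Concretely, for $u<w<z$ the difference quotient $-\frac{1}{w-u}\int_u^w h$ is at most $-\frac{1}{z-w}\int_w^z h$, since $h$ is larger on $[u,w]$ than on $[w,z]$. This is the standard three-chord characterization of convexity.

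\textbf{Differentiability.} By the Lebesgue differentiation theorem for the absolutely continuous map $u\mapsto-\int_u^\infty h$, the derivative exists and equals $-h(u)$ at every Lebesgue point of $h$, in particular almost everywhere. This gives \eqref{eq:deficit_derivative}. The step needing the most care is stating exactly where the derivative exists. Because $h$ is monotone, it has at most countably many discontinuities, and at each continuity point the derivative exists and equals $-h(u)$. At the remaining points the one-sided derivatives are $-h(u^-)$ and $-h(u)$, using right-continuity of $\psi_t$; right-continuity of $g$ is not assumed, so I would rely on the monotone-function argument rather than on continuity of $g$. So "almost everywhere" is the right claim, and the right derivative equals $-g(\psi_t(u^+))$ in general.

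\textbf{Identity-distortion formula and strict decrease.} For \eqref{eq:prob_severity}, take $g(x)=x$, so that $\mathcal D^{(t)}(u)=\E[(M_t-u)\ind\{M_t>u\}]$. When $\psi_t(u)>0$, dividing and multiplying by $\PP(M_t>u)$ gives the conditional expectation by definition. For strict decrease, suppose that on an interval $I$ we have $\psi_t>0$ and that $g(p)>0$ for $p>0$. Then for $u<w$ in $I$, monotonicity of $h$ gives $h(v)\ge h(w)>0$ on $[u,w]$, so $\int_u^w h>0$. A counterexample outside this regime shows that strict decrease is not general: beyond the essential supremum of $M_t$ we have $\psi_t=0$, so $h=0$ and the function is constant there.
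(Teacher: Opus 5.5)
Your proposal is correct and follows essentially the paper's route: you use the tail-integral representation with a nonincreasing integrand in $[0,1]$, the difference identity for monotonicity and the $1$-Lipschitz bound, a split of the integral at zero for $u\le0$, and conditioning on $\{M_t>u\}$ for the identity case, while your three-chord convexity argument and your proof of the strict-decrease clause fill in steps the paper states only tersely. One side remark is slightly off but harmless: at a jump of $h$ the right derivative is $-\lim_{v\downarrow u}g(\psi_t(v))$, which equals $-g(\psi_t(u))$ only if $g$ is left-continuous at $\psi_t(u)$ (right-continuity of $\psi_t$ alone does not give this), but the almost-everywhere derivative claim is unaffected.
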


\begin{proof}
The integrand $g\circ\psi_t$ is nonincreasing and takes values in $[0,1]$. Hence the integral in \eqref{eq:distorted_stoploss} is nonincreasing and has almost-everywhere derivative \eqref{eq:deficit_derivative}; that derivative is nondecreasing, which proves convexity. For $u_1<u_2$,
\[
0\le \mathcal D_g^{(t)}(u_1)-\mathcal D_g^{(t)}(u_2)
=\int_{u_1}^{u_2}g(\psi_t(v))\dd v\le u_2-u_1,
\]
proving continuity and the Lipschitz bound. Since $\psi_t(v)=1$ for $v<0$, splitting the integral at zero gives \eqref{eq:negative_linear}. Finally, \eqref{eq:prob_severity} follows by conditioning $(M_t-u)_+$ on $\{M_t>u\}$.
\end{proof}

\begin{practicalrevision}
When the net profit and integrability conditions ensure that $M_\infty:=\sup_{s\ge0}L_s$ has finite distorted expectation, define the infinite-horizon deficit by
\begin{equation}\label{eq:infinite_horizon_deficit}
\mathcal D_g^{(\infty)}(u)
:=\E_g[(M_\infty-u)_+],
\qquad u\in\R.
\end{equation}
Here $\E_g$ is the distorted expectation defined in \eqref{eq:choquet}; the calligraphic notation $\mathcal E_{g,s}$ is reserved for the conditional operator introduced in Section~\ref{subsec:timeconsistency}. If $g$ is continuous, the tail-integral representation and monotone convergence give
\[
\mathcal D_g^{(\infty)}(u)
=\lim_{t\to\infty}\mathcal D_g^{(t)}(u).
\]
\end{practicalrevision}

Since a Choquet integral generated by a concave distortion function preserves the increasing convex order, the consistency result extends naturally.

\begin{Proposition}\label{prop:distorted_lefevreetal}
Let $L^{(1)}$ and $L^{(2)}$ be two aggregate net loss processes in $\mathcal{H}^0$ such that their maximum overall net losses satisfy $M_t^{(1)} \licx M_t^{(2)}$ for a given $t \ge 0$. If the distortion function $g$ is concave, then their distorted expected maximum deficits verify
\begin{equation}\label{eq:dist_icx}
\mathcal{D}_{g,1}^{(t)}(u) \le \mathcal{D}_{g,2}^{(t)}(u), \quad \text{for all } u \in \R.
\end{equation}
\end{Proposition}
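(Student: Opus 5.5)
The plan is to reduce the statement to the known fact that the distorted stop-loss transform of a random variable, with concave $g$, is monotone with respect to $\licx$. Fix $u\in\R$ and write $\mathcal D_{g,j}^{(t)}(u)=\E_g[(M^{(j)}_t-u)_+]$ as in \eqref{eq:distorted_deficit}. The map $\phi_u(x)=(x-u)_+$ is nondecreasing and convex. Hence $M_t^{(1)}\licx M_t^{(2)}$ implies $(M^{(1)}_t-u)_+\licx (M^{(2)}_t-u)_+$, because composition with an increasing convex function preserves $\licx$: for any nondecreasing convex $f$, the map $f\circ\phi_u$ is again nondecreasing and convex. It therefore suffices to show that for nonnegative $Z_1\licx Z_2$ and concave $g$ we have $\E_g[Z_1]\le\E_g[Z_2]$.

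For this I would use the standard characterization of $\licx$ through the stop-loss transforms $\pi_{Z}(d)=\int_d^\infty \PP(Z>x)\dd x$. I would also use the representation of a concave distortion as a mixture of TVaR-type functionals. A concave $g$ on $[0,1]$ with $g(0)=0$ and $g(1)=1$ can be written as
\[
g(p)=\int_{[0,1]}\min\{p/\alpha,1\}\,\mu(\dd\alpha)
\]
for a probability measure $\mu$, with the convention that $\alpha=0$ gives the indicator $\ind\{p>0\}$. One subtlety: if $g$ jumps at $0$, then $g(0+)>0$ and that part of $\mu$ sits at $\alpha=0$. This decomposition gives
\[
\E_g[Z]=\int \mathrm{TVaR}_{1-\alpha}(Z)\,\mu(\dd\alpha),
\]
where $\mathrm{TVaR}_{1-\alpha}(Z)=\inf_d\{d+\pi_Z(d)/\alpha\}$ and, for $\alpha=0$, it is the essential supremum. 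Since $\pi_{Z_1}\le\pi_{Z_2}$ pointwise, each TVaR is ordered. The essential supremum is ordered as well, because $\licx$ implies that the right endpoints of the supports are ordered. Integrating against $\mu$ then yields the claim.

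The main obstacle is justifying the mixture representation and the interchange of integrals (Fubini/Tonelli) cleanly, including the boundary cases where $g$ is discontinuous at $0$ or where the distorted expectations are infinite. If I want to avoid the mixture, I would instead cite the known result (e.g.\ \citet{WangYoungPanjer1997} or \citet{MullerStoyan2002}) that distortion risk measures with concave $g$ preserve $\licx$.

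A further direct route, which I would keep as a backup, is a Hardy–Littlewood–Pólya-type argument. Set $G(d)=\int_d^\infty g(\PP(Z>x))\dd x$ and compare the two exceedance functions via the majorization $\int_d^\infty \bar F_1\le\int_d^\infty\bar F_2$. Concavity of $g$ makes $x\mapsto g(\bar F(x))$ a "concave rearrangement" for which majorization of integrated tails transfers to integrated distorted tails, proved via Karamata's inequality on quantile functions.

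Finally, the case $u\le0$ requires no separate treatment, since $\phi_u$ is still increasing convex. Alternatively, \eqref{eq:negative_linear} reduces it to $u=0$.
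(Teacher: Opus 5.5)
Your proposal is correct and follows the paper's proof. The paper also observes that $x\mapsto(x-u)_+$ is nondecreasing and convex, so that $(M_t^{(1)}-u)_+\licx(M_t^{(2)}-u)_+$, and then simply cites (Wirch and Hardy) the monotonicity of $\E_g$ under $\licx$ for concave $g$; your TVaR-mixture argument, with the atom at $\alpha=0$ for a jump of $g$ at zero and Tonelli for the nonnegative integrands, just adds a self-contained proof of that cited fact.
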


\begin{proof}
Since the function $f(x) = (x-u)_+$ is non-decreasing and convex for any $u \in \R$, it follows from the properties of the increasing convex order that $M_t^{(1)} \licx M_t^{(2)}$ implies $(M_t^{(1)}-u)_+ \licx (M_t^{(2)}-u)_+$. It is a standard result \citep[see, e.g.,][]{WirchHardy2002} that the Choquet integral $\E_g[\cdot]$ with a concave distortion function $g$ is monotone with respect to the order $\licx$. Applying the operator $\E_g[\cdot]$ to both sides yields the inequality \eqref{eq:dist_icx}.
\end{proof}


\section{Static risk measures}
\label{sec:riskmeasures}

\begin{liquidityrevision}
This section derives three static monetary risk measures from the expected maximum deficit over $[0,t]$. Here and throughout Sections~\ref{sec:riskmeasures}--\ref{sec:allocation}, ``capital'' has its mathematical risk-measure meaning: a deterministic cash amount added to a position, equivalently subtracted from a loss process. Operational interpretation is deferred to Section~\ref{sec:implementation}.
\end{liquidityrevision}

Before detailing specific functionals, we first recall the formal definitions of convex and coherent risk measures on the space of unbounded c\`{a}dl\`{a}g processes $\mathcal{H}^0$, as established by \citet{CheriditoEtAl2005}. A functional $\rho: \mathcal{H}^0 \to \R \cup \{+\infty\}$ is called a \emph{convex risk measure} if it satisfies the following conditions for any $X, Y \in \mathcal{H}^0$:
\begin{enumerate}
\item[(i)] \textbf{Finiteness:} $\rho(X) \in \R$ for all $X \in \mathcal{H}^\infty$.
\item[(ii)] \textbf{Translation Invariance:} $\rho(X+c) = \rho(X) + c$ for any constant $c \in \R$.
\item[(iii)] \textbf{Monotonicity:} If $X_s \le Y_s$ almost surely for all $s \in [0,t]$, then $\rho(X) \le \rho(Y)$.
\item[(iv)] \textbf{Convexity:} $\rho(\alpha X + (1-\alpha) Y) \le \alpha \rho(X) + (1-\alpha) \rho(Y)$ for any $\alpha \in [0,1]$.
\end{enumerate}
Furthermore, the functional $\rho$ is a \emph{coherent risk measure} if it is a convex risk measure that additionally satisfies:
\begin{enumerate}
\item[(v)] \textbf{Positive Homogeneity:} $\rho(\lambda X) = \lambda \rho(X)$ for any $\lambda \ge 0$.
\end{enumerate}
Under positive homogeneity, convexity is mathematically equivalent to subadditivity, i.e., $\rho(X+Y) \le \rho(X) + \rho(Y)$.

We focus on the risk measures defined from the expected maximum deficit. Multiplying a monetary risk measure by a scalar generally changes its cash-additivity coefficient; any normalization must therefore be specified as part of the acceptance criterion rather than treated as an innocuous rescaling.

\begin{Remark}[Initial-state anchored scaling]\label{rem:anchored_scaling}
The preceding warning admits a useful qualification on a process space. Since $\mathcal F_0$ is trivial, the initial-value functional
\[
\rho_0(X)=X_0
\]
is real-valued and coherent. Let $\rho$ be any convex risk measure on $\mathcal H^0$. For $\gamma\in(0,1]$, define
\begin{equation}\label{eq:anchored_scaling}
\rho^{\langle\gamma\rangle}(X)
=(1-\gamma)\rho_0(X)+\gamma\rho(X),
\end{equation}
and set $\rho^{\langle0\rangle}=\rho_0$. The class of convex risk measures is closed under convex combinations, so $\rho^{\langle\gamma\rangle}$ is convex; if $\rho$ is coherent, it is coherent as well. This can also be checked axiom by axiom, since both components have the same cash-additivity coefficient and the weights are nonnegative and sum to one.

For the normalized net-loss process $L_0=0$, equation~\eqref{eq:anchored_scaling} reduces to
\begin{equation}\label{eq:anchored_scaling_loss}
\rho^{\langle\gamma\rangle}(L)=\gamma\rho(L).
\end{equation}
Thus a downward scalar rescaling on the class of zero-initial-state loss processes has a valid cash-additive extension to the full process space. It is the anchored extension in \eqref{eq:anchored_scaling}, rather than the raw functional $\gamma\rho$, that remains monetary on general processes.

The restriction $\gamma\le1$ cannot be removed as a general closure result. For example, let $\rho(X)=X_t$, choose $0<\varepsilon<t$, and consider the deterministic paths $X_s=0$ and $Y_s=(1-s/\varepsilon)_+$. Then $X_s\le Y_s$ for every $s$, while $X_0=0$, $Y_0=1$ and $X_t=Y_t=0$. Extending \eqref{eq:anchored_scaling} to $\gamma>1$ gives
\[
\rho^{\langle\gamma\rangle}(Y)=1-\gamma<0
=\rho^{\langle\gamma\rangle}(X),
\]
which violates monotonicity. Exceptional risk measures may permit larger coefficients, but coherence or convexity alone does not guarantee this.
\begin{bluerevision}
The observation applies to $\rho_g^{(t)}$, $\tilde\rho_{A,g}^{(t)}$ and $\rho_{\delta,g}^{(t)}$. For $\tilde\rho_{A,g}^{(t)}$ and $\rho_{\delta,g}^{(t)}$, external anchored scaling is not the same operation as recalibrating $A$ or $\delta$, which changes the underlying acceptance criterion.
\end{bluerevision}
\end{Remark}

\subsection{A coherent measure of the expected deficit}
\label{subsec:coherent}

\begin{bluerevision}
Consider any distortion function $g: [0,1] \to [0,1]$, which is non-decreasing with $g(0)=0$ and $g(1)=1$. For a generic continuous-time process $X \in \mathcal{H}^0$, define the static distortion risk measure over the horizon $t$ as the Choquet integral of its running supremum. Extending the representation from Section~\ref{sec:deficit} to the entire real line gives:
\end{bluerevision}
\begin{equation}\label{eq:risk_measure}
\rho_g^{(t)}(X) = \E_g\Big[\sup_{0 \le s \le t} X_s\Big].
\end{equation}

Related path-dependent process risk measures appear explicitly in \citet[Section~5.1]{CheriditoEtAl2005}, who study VaR and AVaR of the running minimum of a c\`adl\`ag surplus process and apply them to the Cram\'er--Lundberg model. Under the loss convention $L=-R$, their AVaR measure is the TVaR-distortion case of \eqref{eq:risk_measure}; the present formulation embeds it in the wider class of concave distortions and supplies the running-maximum functional used in the fixed-tolerance measure $\tilde\rho_{A,g}^{(t)}$ and the proportional-tolerance measure $\rho_{\delta,g}^{(t)}$.

\begin{Proposition}\label{prop:coherentriskmeasure}
Assume that the distortion function $g$ is concave. The static risk measure $\rho_g^{(t)}$ defined in \eqref{eq:risk_measure} is a coherent risk measure on $\mathcal{H}^0$.
\end{Proposition}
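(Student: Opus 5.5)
The plan is to reduce all five axioms to the corresponding properties of the distorted expectation $\E_g$ on random variables, via the map $X\mapsto S(X):=\sup_{0\le s\le t}X_s$. This map is pathwise monotone, translation-equivariant, positively homogeneous and subadditive. Indeed, $S(X+c)=S(X)+c$ and $S(\lambda X)=\lambda S(X)$ for $\lambda\ge0$. Moreover, $X_s\le Y_s$ for all $s$ a.s. gives $S(X)\le S(Y)$ a.s., once we use c\`adl\`ag paths to pass from "a.s. for each $s$" to "a.s. for all $s$" through a countable dense set of times including $t$. Finally, $S(X+Y)\le S(X)+S(Y)$ pathwise. Since $X$ is c\`adl\`ag, $S(X)$ equals a supremum over rational times together with $t$, so it is a measurable random variable taking values in $(-\infty,\infty]$.

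The first step is finiteness. For $X\in\mathcal H^\infty$, $S(X)$ is bounded by $K=\|\sup_s|X_s|\|_\infty$. The Choquet integral \eqref{eq:choquet} then has integrands vanishing outside $[-K,K]$ and bounded by $1$, so $\rho_g^{(t)}(X)\in[-K,K]$. For unbounded $X$ the value may be $+\infty$, which the definition allows. I would remark that one needs the negative part to be finite, or adopt the convention that $\rho_g^{(t)}$ is taken in $\R\cup\{+\infty\}$ when it is well defined.

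Next, the random-variable level properties. Translation invariance $\E_g[Z+c]=\E_g[Z]+c$ follows from a change of variable in \eqref{eq:choquet}: writing $\int_{-\infty}^\infty\bigl(g(\PP(Z>x))-\ind_{x<0}\bigr)\dd x$, a shift by $c$ changes only the indicator term, and that change contributes exactly $c$. Positive homogeneity follows by the substitution $x=\lambda y$ for $\lambda>0$, while the case $\lambda=0$ is trivial since $g(1)=1$ and $g(0)=0$. Monotonicity holds because $Z_1\le Z_2$ a.s. gives $\PP(Z_1>x)\le\PP(Z_2>x)$ and $g$ is nondecreasing. Composing these with the properties of $S$ yields (ii), (iii) and (v) for $\rho_g^{(t)}$.

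The main obstacle is subadditivity, from which convexity (iv) follows by positive homogeneity. Using monotonicity, $\rho_g^{(t)}(X+Y)=\E_g[S(X+Y)]\le\E_g[S(X)+S(Y)]$, so it suffices to invoke subadditivity of the Choquet integral under a concave distortion. I would cite the classical result \citep{WangYoungPanjer1997,WirchHardy2002}: $g$ concave means $g\circ\PP$ is a submodular capacity, and by Schmeidler/Denneberg the Choquet integral with respect to a submodular capacity is subadditive. I would treat this result as given, noting that it is the same concavity-driven property already used in Proposition~\ref{prop:distorted_lefevreetal}. The remaining care is for unbounded $X,Y$. Here I would apply the bounded result to the truncations $S(X)\wedge n\vee(-m)$ and pass to the limit by monotone convergence in \eqref{eq:choquet}. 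This limit step is where I expect the technical friction, particularly when $+\infty$ values appear, in which case the inequality is trivial.
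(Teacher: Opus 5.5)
Your proposal is correct and follows essentially the same route as the paper. You compose $\E_g$ with the running supremum, transfer translation invariance, positive homogeneity and monotonicity through the pointwise properties of $S$, and obtain subadditivity from $S(X+Y)\le S(X)+S(Y)$ together with monotonicity and subadditivity of the concave-distortion Choquet integral. The paper cites Wirch and Hardy for that last property, where you invoke submodularity and Schmeidler--Denneberg. Your measurability and truncation arguments are extra care that the paper omits.

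Your hedge about the negative part is unnecessary. Since $\mathcal F_0$ is trivial, $X_0$ is deterministic and $S(X)\ge X_0\in\R$, so $\E_g[S(X)]\in(-\infty,+\infty]$ is always well defined; this also makes the lower truncation superfluous.
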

\begin{proof}
First, we verify the finiteness condition. For any $X \in \mathcal{H}^\infty$, its running supremum $\sup_{0 \le s \le t} X_s$ is a bounded random variable. Because the distortion function $g$ maps to $[0,1]$, the Choquet integral of a bounded random variable evaluates to a finite real number, ensuring $\rho_g^{(t)}(X) \in \R$.

The functional $\rho_g^{(t)}(\cdot)$ operates as the Choquet integral $\E_g[\cdot]$ evaluated on the running supremum $\sup_{0 \le s\le t}(\cdot)$. The supremum is monotone, translation invariant and positively homogeneous pointwise. For a concave distortion function $g$, the Choquet integral is well known to be a coherent risk measure on the space of random variables \citep[see, e.g.,][]{WirchHardy2002}. For completeness, subadditivity of the composition follows from
\[
\sup_{0\le s\le t}(X_s+Y_s)
\le \sup_{0\le s\le t}X_s+\sup_{0\le s\le t}Y_s
\]
and then from monotonicity and subadditivity of $\E_g$. Translation invariance and positive homogeneity follow from the corresponding pointwise identities for the running supremum, and convexity follows from subadditivity and positive homogeneity. Thus each coherence axiom is verified directly; no general composition principle is being invoked.
\end{proof}

\begin{bluerevision}
For the aggregate net-loss process $L$, the maximum $M_t=\sup_{0\le s\le t}L_s$ is nonnegative, so the negative part of the Choquet integral vanishes. Using $\psi_t(u)=\PP(M_t>u)$ and \eqref{eq:distorted_stoploss}, the value of \eqref{eq:risk_measure} on $L$ is the distorted expected maximum deficit at zero candidate reserve:
\end{bluerevision}
\begin{equation}\label{eq:risk_zero}
\rho_g^{(t)}(L) = \int_0^\infty g\big(\psi_t(v)\big)\dd v := \mathcal{D}_g^{(t)}(0), \quad t\geq 0.
\end{equation}
This identity formally links the actuarial concept of ruin severity to the axiomatic framework of coherent risk measures.

\begin{practicalrevision}
For the risk measure in \eqref{eq:risk_measure}, if $\E_g[M_\infty]<\infty$ and the distortion $g$ is continuous, define its infinite-horizon value by
\begin{equation}\label{eq:infinite_horizon_definition}
\rho_g^{(\infty)}(L)
:=\E_g[M_\infty]
=\mathcal D_g^{(\infty)}(0)
=\lim_{t\to\infty}\rho_g^{(t)}(L).
\end{equation}
\end{practicalrevision}

\begin{bluerevision}
For the TVaR distortion $g(x)=\min(x/\alpha,1)$, \eqref{eq:risk_measure} is AVaR of the running maximum, the loss-convention counterpart of the process measure in \citet[Section~5.1]{CheriditoEtAl2005}. The indicator distortion instead gives the corresponding VaR and is not concave, so coherence is not guaranteed. These familiar cases locate \eqref{eq:risk_measure} within the existing process-risk literature; the fixed-tolerance measure $\tilde\rho_{A,g}^{(t)}$ and proportional-tolerance measure $\rho_{\delta,g}^{(t)}$ are developed below.
\end{bluerevision}

\begin{Example}\label{ex:coherent_exp}
To establish an analytical benchmark for this coherent functional, consider an infinite-horizon ($t=\infty$) compound Poisson risk process with exponentially distributed claim severities $Y \sim \mathrm{Exp}(1/\mu)$ and a premium rate $c$ sustaining a positive safety loading. The same exponential Cram\'er--Lundberg setting is used by \citet[Section~5.1]{CheriditoEtAl2005} to illustrate VaR and AVaR process risk measures. The ultimate ruin probability decays exponentially according to $\psi(u) = (1-\mu R)e^{-Ru}$ for $u\ge0$, where the adjustment coefficient is defined as $R = \frac{1}{\mu}(1 - \frac{\lambda\mu}{c})$.
\\
For a general distortion function $g$, the coherent risk measure evaluates to $\rho_g^{(\infty)}(L) = \mathcal{D}_g^{(\infty)}(0) = \int_0^\infty g\left((1-\mu R)e^{-Rv}\right) \dd v$.
To obtain closed-form solutions, we particularize the distortion to a proportional hazard distortion $g(x) = x^k$ with $k \in (0,1]$, ensuring the necessary concavity. The expected maximum deficit functional yields a closed-form solution:

Set $a=1-\mu R$ and $C=a^k/(kR)$. Lemma~\ref{lem:deficit_properties} gives the deficit function:
\begin{equation}\label{eq:exp_deficit_piecewise}
\mathcal D_g^{(\infty)}(u)=
\begin{cases}
C-u,&u<0,\\[1mm]
C e^{-kRu},&u\ge0.
\end{cases}
\end{equation}

\begin{bluerevision}
Evaluating the deficit function at $u=0$ yields the running-maximum risk-measure value $\rho_g^{(\infty)}(L)$:
\end{bluerevision}
\begin{equation}\label{eq:exp_coherent}
\rho_g^{(\infty)}(L) = \frac{(1-\mu R)^k}{kR}=C.
\end{equation}
To benchmark the structural behavior of this measure, we isolate the undistorted Expected Maximum Deficit (EMD) framework by setting $g(x)=x$ (i.e., $k=1$). The coherent measure reduces to $\rho^{(\infty)}(L) = \frac{1-\mu R}{R}$.
\end{Example}

\subsection{Fixed- and proportional-tolerance monetary risk measures}
\label{subsec:implicit}

To integrate tail risk aversion for a generic continuous-time loss process $X \in \mathcal{H}^0$, one might first consider the raw functional
\[
\E_g\!\left[\left(\sup_{0\le s\le t}X_s\right)_+\right].
\]
Although the positive part preserves convexity and monotonicity, this functional is not cash additive: injecting deterministic capital reduces the expected deficit nonlinearly. It is therefore a convex penalty, not a monetary risk measure in isolation.

\subsubsection{A convex measure with a fixed risk tolerance}
\label{subsubsec:convex}

Because the raw functional is not cash additive, we use it instead to define an acceptance set in which the distorted expected positive shortfall is bounded by a prescribed deterministic tolerance $A>0$:
\begin{equation}\label{eq:acceptance}
\mathcal{A}_{A,g}^{(t)} = \left\{ X \in \mathcal{H}^0 \;\middle|\; \E_g\Big[\Bigl(\sup_{0\le s\le t} X_s\Bigr)_+\Big] \le A \right\},
\end{equation}
where the expectation $\E_g$ is evaluated via the Choquet integral with respect to a concave distortion function $g$.

The fixed parameter $A$ has monetary units and bounds the distorted expected depth of the maximum shortfall. Its interpretation and calibration are deferred to Section~\ref{sec:implementation}.

Following the axiomatic structure of convex risk measures, the associated risk measure is then defined as the minimal capital $u \in \R$ that must be injected into the portfolio so that the shifted loss process $(X - u)$ belongs to this static acceptance set:
\begin{equation}\label{eq:mincap}
\tilde{\rho}_{A,g}^{(t)}(X) = \inf \left\{ u \in \R : (X - u) \in \mathcal{A}_{A,g}^{(t)} \right\}.
\end{equation}

\begin{Proposition}[Capital below zero]\label{prop:negative_capital}
Suppose $\mathcal D_g^{(t)}(0)<\infty$. For the net-loss process $L$,
\begin{equation}\label{eq:universal_negative_capital}
\tilde\rho_{A,g}^{(t)}(L)=\mathcal D_g^{(t)}(0)-A
\qquad\text{whenever }A\ge\mathcal D_g^{(t)}(0).
\end{equation}
\end{Proposition}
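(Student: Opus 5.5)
The plan is to rewrite the infimum in \eqref{eq:mincap} as the solution set of an inequality for the deficit function, and then read the answer off the linear branch of Lemma~\ref{lem:deficit_properties}. For a real $u$, the shifted process $L-u$ has running supremum $M_t-u$. Hence $L-u\in\mathcal A_{A,g}^{(t)}$ if and only if
\[
\E_g[(M_t-u)_+]\le A,
\]
that is, if and only if $\mathcal D_g^{(t)}(u)\le A$. This gives
\[
\tilde\rho_{A,g}^{(t)}(L)=\inf\{u\in\R:\mathcal D_g^{(t)}(u)\le A\}.
\]
The whole problem therefore reduces to a one-dimensional question about the function $u\mapsto\mathcal D_g^{(t)}(u)$.

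Next I use the properties from Lemma~\ref{lem:deficit_properties}. The hypothesis $\mathcal D_g^{(t)}(0)<\infty$ is exactly the one the lemma requires. It makes $\mathcal D_g^{(t)}$ nonincreasing and continuous, so the sublevel set $\{u:\mathcal D_g^{(t)}(u)\le A\}$ is a closed interval that is unbounded to the right, of the form $[u^*,\infty)$ or all of $\R$. To locate its left endpoint I use \eqref{eq:negative_linear}: for $u\le 0$ we have $\mathcal D_g^{(t)}(u)=\mathcal D_g^{(t)}(0)-u$, which is strictly decreasing on $(-\infty,0]$. The candidate is $u^*:=\mathcal D_g^{(t)}(0)-A$, and the assumption $A\ge\mathcal D_g^{(t)}(0)$ gives $u^*\le0$. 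Two checks then finish the argument.
\begin{itemize}
\item[(a)] At $u^*$ the linear formula gives $\mathcal D_g^{(t)}(u^*)=A$, so $u^*$ is feasible.
\item[(b)] For any $u<u^*$ we still have $u\le0$, so $\mathcal D_g^{(t)}(u)=\mathcal D_g^{(t)}(0)-u>A$, and $u$ is infeasible.
\end{itemize}
Hence the infimum equals $u^*$, which is \eqref{eq:universal_negative_capital}.

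The only step needing care is the identification in the first paragraph, namely that the sup-functional in \eqref{eq:acceptance} commutes with the shift and that $\E_g$ of the positive part equals the tail integral \eqref{eq:distorted_stoploss}. This holds because $\sup_s(L_s-u)=M_t-u$ pathwise, and because \eqref{eq:distorted_deficit}--\eqref{eq:distorted_stoploss} were derived for every real $u$, including negative $u$. Behind \eqref{eq:negative_linear} is the fact that $M_t\ge L_0=0$, so $\psi_t(v)=1$ and $g(1)=1$ for $v<0$; I will point to this as the source of the linear branch. Concavity of $g$ is not needed here; monotonicity of $g$ and the normalization $g(1)=1$ suffice.
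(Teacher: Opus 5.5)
Your proposal is correct and follows essentially the same route as the paper. The paper also reduces acceptability to $\mathcal D_g^{(t)}(u)\le A$ and reads the generalized-inverse boundary $u=\mathcal D_g^{(t)}(0)-A\le 0$ off the linear branch \eqref{eq:negative_linear}; your feasibility check at $u^*$ and infeasibility check for $u<u^*$ spell out what the paper states in one line.
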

\begin{proof}
By \eqref{eq:negative_linear}, $\mathcal D_g^{(t)}(u)=\mathcal D_g^{(t)}(0)-u$ for $u\le0$. If $A\ge\mathcal D_g^{(t)}(0)$, the generalized-inverse boundary is therefore $u=\mathcal D_g^{(t)}(0)-A\le0$.
\end{proof}
\begin{bluerevision}
Defining $\tilde\rho_{A,g}^{(t)}$ as the infimum of acceptable capital adjustments restores translation invariance despite the positive part in \eqref{eq:acceptance}. Because the Choquet integral generated by a concave $g$ is coherent, this acceptance-set definition yields the convex monetary risk measure proved below.
\end{bluerevision}

\begin{Proposition}\label{prop:convex_axioms}
Assume that the distortion function $g$ is concave. For any tolerance level $A > 0$, the static risk measure $\tilde{\rho}_{A,g}^{(t)}$ defined in \eqref{eq:mincap} is a convex risk measure on $\mathcal{H}^0$.
\end{Proposition}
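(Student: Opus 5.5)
The plan is to work throughout with the penalty $\Phi(X):=\E_g\bigl[(\sup_{0\le s\le t}X_s)_+\bigr]$, so that $\mathcal A_{A,g}^{(t)}=\{\Phi\le A\}$ and $\tilde\rho_{A,g}^{(t)}(X)=\inf\{u\in\R:\Phi(X-u)\le A\}$. Note that $\sup_s(X_s-u)=\sup_sX_s-u$. Hence $u\mapsto\Phi(X-u)$ is exactly the distorted deficit function of the running maximum of $X$, evaluated at $u$. The argument in Lemma~\ref{lem:deficit_properties} did not use $L_0=0$ for monotonicity, convexity or the $1$-Lipschitz property, so it applies verbatim to a general $X$ whenever $\Phi(X-u)<\infty$ for some $u$. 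I would first record three properties of $\Phi$: it is monotone, it is convex, and $\Phi(X-u)\downarrow0$ as $u\to\infty$ when finite. Monotonicity holds because $x\mapsto x_+$ and $\sup_s$ are pointwise nondecreasing and $\E_g$ is monotone. Convexity holds because $\sup_s(\alpha X_s+(1-\alpha)Y_s)\le\alpha\sup_sX_s+(1-\alpha)\sup_sY_s$ and $x\mapsto x_+$ is convex and nondecreasing, so the composite random variable is dominated pointwise by $\alpha(\sup X)_++(1-\alpha)(\sup Y)_+$. Then monotonicity, subadditivity and positive homogeneity of $\E_g$ for concave $g$ (coherence, as used in Proposition~\ref{prop:coherentriskmeasure}) give $\Phi(\alpha X+(1-\alpha)Y)\le\alpha\Phi(X)+(1-\alpha)\Phi(Y)$.

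The axioms then follow in order.

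\emph{Translation invariance.} Substitute $v=u-c$ in the infimum: $(X+c)-u=X-(u-c)$, so $\tilde\rho_{A,g}^{(t)}(X+c)=\tilde\rho_{A,g}^{(t)}(X)+c$.

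\emph{Monotonicity.} If $X_s\le Y_s$ for all $s$ a.s., then $\Phi(X-u)\le\Phi(Y-u)$. Hence every $u$ acceptable for $Y$ is acceptable for $X$, and the infimum for $X$ is no larger.

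\emph{Convexity.} Take any $u_1,u_2$ with $X-u_1$ and $Y-u_2$ acceptable. Convexity of $\Phi$ gives $\Phi\bigl(\alpha X+(1-\alpha)Y-(\alpha u_1+(1-\alpha)u_2)\bigr)\le\alpha A+(1-\alpha)A=A$. So $\alpha u_1+(1-\alpha)u_2$ is acceptable for the mixture, and taking infima over $u_1,u_2$ yields (iv). This also covers the cases where one side is $+\infty$.

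\emph{Finiteness.} Let $X\in\mathcal H^\infty$ with $K:=\|\sup_s|X_s|\|_\infty$. For $u\ge K$ the running supremum of $X-u$ is $\le0$ a.s., so $\Phi(X-u)=0\le A$; hence $\tilde\rho_{A,g}^{(t)}(X)\le K$. For $u<-K$ the supremum of $X-u$ is at least $-K-u>0$ a.s. By monotonicity of $\E_g$ and $g(1)=1$, $\Phi(X-u)\ge -K-u$, which exceeds $A$ once $u<-K-A$. Therefore $\tilde\rho_{A,g}^{(t)}(X)\ge -K-A$, and the measure is real-valued on bounded processes.

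The step I expect to need most care is the convexity of $\Phi$. It requires correctly chaining pointwise domination, monotonicity of $\E_g$, and subadditivity plus positive homogeneity of the Choquet integral for concave $g$, the last being where concavity is essential. A secondary subtlety is allowing $\Phi$ to take the value $+\infty$ on $\mathcal H^0$. This is harmless because the definition uses an infimum over acceptable $u$ with the convention $\inf\emptyset=+\infty$, which the codomain $\R\cup\{+\infty\}$ permits.
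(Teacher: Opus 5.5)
Your proposal is correct and follows the paper's proof essentially step for step: translation invariance by the substitution $v=u-c$, monotonicity through $M_t^X\le M_t^Y$, convexity by showing that $\alpha u_1+(1-\alpha)u_2$ is acceptable for the mixture via pointwise domination and the subadditivity and positive homogeneity of $\E_g$, and finiteness by bracketing the infimum between $-K-A$ and $K$. The only difference is cosmetic and lies in the finiteness lower bound: you use $(\sup_s X_s-u)_+\ge -K-u>0$ almost surely, whereas the paper uses $(x)_+\ge x$ together with cash additivity of $\E_g$.
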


\begin{proof}
Let $M_t^X = \sup_{0 \le s \le t} X_s$ and $M_t^Y = \sup_{0 \le s \le t} Y_s$ denote the respective running suprema for any processes $X, Y \in \mathcal{H}^0$.

First, consider finiteness on $\mathcal H^\infty$. Choose $C<\infty$ such that $|M_t^X|\le C$ almost surely. For $u\ge C$, $(M_t^X-u)_+=0$, so the acceptance set is nonempty. Conversely,
\[
\E_g[(M_t^X-u)_+]\ge \E_g[M_t^X-u]
=\E_g[M_t^X]-u\ge-C-u.
\]
Acceptance therefore implies $u\ge-(C+A)$, so the acceptable set has a finite lower bound and $\tilde\rho_{A,g}^{(t)}(X)\in\R$.

Because $g$ is concave, the Choquet integral $\E_g[\cdot]$ is a subadditive operator satisfying monotonicity and positive homogeneity. We verify the remaining axioms for convexity:

\textit{Translation Invariance.} By definition, the risk measure evaluates the infimum of the set of acceptable capital levels for the translated process $X + c$. Letting $\mathcal{V} = \{ v \in \R : \E_g[(M_t^X - v)_+] \le A \}$, it is immediate that an evaluated capital $u$ for the translated process satisfies $u-c \in \mathcal{V}$. Applying the translation property of the infimum operator, we conclude $\tilde{\rho}_{A,g}^{(t)}(X+c) = \inf(\mathcal{V}) + c = \tilde{\rho}_{A,g}^{(t)}(X) + c$.

\textit{Monotonicity.} Since $X_s \le Y_s$ pointwise, we have $M_t^X \le M_t^Y$. Thus, for any $u \in \R$, $\E_g[(M_t^X - u)_+] \le \E_g[(M_t^Y - u)_+]$, which yields $\tilde{\rho}_{A,g}^{(t)}(X) \le \tilde{\rho}_{A,g}^{(t)}(Y)$ by definition of the infimum.

\textit{Convexity.} Let $\tilde{\rho}_X = \tilde{\rho}_{A,g}^{(t)}(X)$ and $\tilde{\rho}_Y = \tilde{\rho}_{A,g}^{(t)}(Y)$. Consider arbitrary acceptable capital levels $u_X > \tilde{\rho}_X$ and $u_Y > \tilde{\rho}_Y$ such that $\E_g[(M_t^X - u_X)_+] \le A$ and $\E_g[(M_t^Y - u_Y)_+] \le A$. For a convex combination $u = \alpha u_X + (1-\alpha)u_Y$ with $\alpha \in [0,1]$, we obtain:
\begin{align*}
\E_g\Big[\Bigl(\sup_{0 \le s \le t} (\alpha X_s + (1 - \alpha)Y_s) - u\Bigr)_+\Big] 
&\le \E_g\Big[\big(\alpha (M_t^X - u_X) + (1 - \alpha)(M_t^Y - u_Y)\big)_+\Big] \\
&\le \alpha \E_g\big[(M_t^X - u_X)_+\big] + (1 - \alpha) \E_g\big[(M_t^Y - u_Y)_+\big] \le A.
\end{align*}
Thus, $u$ is an acceptable capital level for the combined process, meaning $\tilde{\rho}_{A,g}^{(t)}\Bigl(\alpha X+(1-\alpha) Y\Bigr) \le u$. Taking the limit as $u_X \downarrow \tilde{\rho}_X$ and $u_Y \downarrow \tilde{\rho}_Y$ establishes convexity.
\end{proof}

\begin{bluerevision}
For the aggregate net-loss process $L$, the acceptance condition bounds $\mathcal D_g^{(t)}(u)$. Hence the fixed-tolerance value is its generalized inverse at $A$:
\end{bluerevision}
\begin{equation}\label{eq:inverse_operator_distorted}
\tilde{\rho}_{A,g}^{(t)}(L) = \inf\left\{u \in \R : \mathcal{D}_g^{(t)}(u) \le A\right\} := \big(\mathcal{D}_g^{(t)}\big)^{-1}(A).
\end{equation}

For any concave distortion, $g(x)\ge x$ on $[0,1]$, so $\mathcal D_g^{(t)}(u)\ge\mathcal D^{(t)}(u)$. Monotonicity and continuity from Lemma~\ref{lem:deficit_properties}, rather than strict decrease, are sufficient for the generalized-inverse ordering
\[
\tilde\rho_{A,g}^{(t)}(L)\ge \inf\{u\in\R:\mathcal D^{(t)}(u)\le A\}.
\]

\begin{Example}\label{ex:convex_exp}
Building upon the infinite-horizon exponential framework from Example~\ref{ex:coherent_exp}, we can analytically compute the fixed-tolerance monetary risk-measure value. With $C=(1-\mu R)^k/(kR)$, inversion of the complete piecewise deficit \eqref{eq:exp_deficit_piecewise} gives
\begin{equation}\label{eq:exp_convex}
\tilde\rho_{A,g}^{(\infty)}(L)=
\begin{cases}
\displaystyle \frac{1}{kR}\log\!\left(\frac{C}{A}\right),&0<A\le C,\\[3mm]
C-A,&A>C.
\end{cases}
\end{equation}
To benchmark the structural behavior of this measure, isolating the undistorted Expected Maximum Deficit (EMD) framework by setting $g(x)=x$ (i.e., $k=1$) collapses the convex measure to:
\begin{equation}\label{eq:exp_convex_undist}
\tilde\rho_A^{(\infty)}(L)=
\begin{cases}
\displaystyle \frac{1}{R}\log\!\left(\frac{1-\mu R}{RA}\right),&0<A\le \frac{1-\mu R}{R},\\[3mm]
\displaystyle \frac{1-\mu R}{R}-A,&A>\frac{1-\mu R}{R}.
\end{cases}
\end{equation}
\begin{liquidityrevision}
For $A>C$, inversion occurs on the negative-capital branch and gives $C-A$. The expected area in red of \citet{LoiselTrufin2014} supplies a complementary duration-sensitive criterion. Its tolerance has units of money times time, whereas $A$ has units of money; consequently, numerical rankings require an explicit conversion convention and no unconditional ordering follows from the two formulas.
\end{liquidityrevision}

\end{Example}

\subsubsection{A coherent measure with a proportional risk tolerance}
\label{subsubsec:proportional_coherent}

\begin{bluerevision}
While the fixed-tolerance measure $\tilde{\rho}_{A,g}^{(t)}$ restricts the absolute distorted expected positive shortfall by the scalar $A$, a capital-dependent tolerance prevents negative capital assignments and restores positive homogeneity. The resulting proportional-tolerance measure $\rho_{\delta,g}^{(t)}$ is therefore coherent.

For a generic continuous-time loss process $X \in \mathcal{H}^0$ and a proportionality parameter $\delta > 0$, define $\rho_{\delta,g}^{(t)}(X)$ as the minimal capital $u \in \R$ for which the distorted expected positive shortfall does not exceed $\delta$ times the additionally injected capital $u-X_0$, where $X_0 \in \R$ is deterministic:
\end{bluerevision}
\begin{equation}\label{eq:coherent_implicit}
\rho_{\delta,g}^{(t)}(X) = \inf \left\{ u \in \R \;\middle|\; \E_g\Big[\Bigl(\sup_{0\le s\le t} X_s - u\Bigr)_+\Big] \le \delta (u - X_0) \right\}.
\end{equation}

By linking the risk tolerance linearly to the added capital itself, we recover positive homogeneity and, consequently, full coherence.

\begin{Proposition}\label{prop:implicit_coherent}
Assume that the distortion function $g$ is concave and $\delta > 0$. The static risk measure $\rho_{\delta,g}^{(t)}$ defined in \eqref{eq:coherent_implicit} is a coherent risk measure on $\mathcal{H}^0$.
\end{Proposition}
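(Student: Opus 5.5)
We verify the five axioms directly on the acceptance condition in \eqref{eq:coherent_implicit}, writing $M_t^X=\sup_{0\le s\le t}X_s$ and $\phi_X(u)=\E_g[(M_t^X-u)_+]-\delta(u-X_0)$. The set of acceptable $u$ is $\{u:\phi_X(u)\le0\}$. By Lemma~\ref{lem:deficit_properties}, applied to the running supremum of a general process, $u\mapsto\E_g[(M_t^X-u)_+]$ is nonincreasing and continuous whenever it is finite. Hence $\phi_X$ is strictly decreasing and continuous, the acceptable set is a closed half-line $[\rho,\infty)$, and the infimum is attained.

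\emph{Finiteness.} For $X\in\mathcal H^\infty$ with $|M_t^X|\le C$, take $u\ge\max(C,X_0)$. Then the shortfall vanishes and $\delta(u-X_0)\ge0$, so the acceptable set is nonempty. For a lower bound, note that any acceptable $u$ satisfies $\delta(u-X_0)\ge0$, i.e.\ $u\ge X_0$. This is exactly the mechanism that rules out negative capital. Hence the infimum is real.

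\emph{Translation invariance and positive homogeneity.} For $X+c$, we have $M^{X+c}=M^X+c$ and $(X+c)_0=X_0+c$. Substituting $u=v+c$ shows that the acceptable sets differ by the shift $c$. For $\lambda>0$, positive homogeneity of $\E_g$ and of $(\cdot)_+$ gives $\E_g[(\lambda M^X-\lambda v)_+]=\lambda\E_g[(M^X-v)_+]$ and $\delta(\lambda v-\lambda X_0)=\lambda\delta(v-X_0)$. So $u$ is acceptable for $\lambda X$ iff $u/\lambda$ is acceptable for $X$, which yields $\rho(\lambda X)=\lambda\rho(X)$. The case $\lambda=0$ needs a separate check: the zero process has $X_0=0$, and its acceptable set is $\{u:(-u)_+\le\delta u\}=[0,\infty)$, giving value $0$.

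\emph{Monotonicity.} If $X\le Y$ pathwise, then $M^X\le M^Y$ and also $X_0\le Y_0$, since the condition holds at $s=0$ and the initial values are deterministic. Thus $\phi_X(u)\le\phi_Y(u)$ for every $u$, so every $u$ acceptable for $Y$ is acceptable for $X$.

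\emph{Convexity.} Given homogeneity, it suffices to prove convexity, and we mirror the proof of Proposition~\ref{prop:convex_axioms}. Take acceptable $u_X,u_Y$ and set $u=\alpha u_X+(1-\alpha)u_Y$. Use
\[
M^{\alpha X+(1-\alpha)Y}-u\le\alpha(M^X-u_X)+(1-\alpha)(M^Y-u_Y),
\]
together with convexity of $(\cdot)_+$ and monotonicity and subadditivity-plus-homogeneity of $\E_g$ for concave $g$. This bounds the shortfall by $\alpha\delta(u_X-X_0)+(1-\alpha)\delta(u_Y-Y_0)=\delta\bigl(u-(\alpha X+(1-\alpha)Y)_0\bigr)$. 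The right side is linear in the data precisely because the tolerance is affine in $u-X_0$. Passing to the infima, which are attained by the closedness noted above, gives convexity.

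The main obstacle I expect is finiteness and well-posedness on unbounded $X\in\mathcal H^0$, where $\E_g[(M^X-u)_+]$ may be $+\infty$. The axioms must then be read with $\rho$ valued in $\R\cup\{+\infty\}$ and $\inf\emptyset=+\infty$. Each step above needs a short check that the implications remain valid with infinite values, for instance that convexity holds trivially when either $\rho(X)$ or $\rho(Y)$ is infinite. The lower bound $u\ge X_0$ should also be stated carefully, since it is what keeps the infimum from being $-\infty$.
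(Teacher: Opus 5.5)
Your proposal is correct and follows the paper's proof essentially step by step. It uses the same lower bound $u\ge X_0$ for finiteness, the same shift and scaling arguments for translation invariance and positive homogeneity, and the same use of $X_0\le Y_0$ for monotonicity. The only differences are minor. You prove convexity directly from a convex combination of acceptable capitals, as in Proposition~\ref{prop:convex_axioms}, and use attainment of the infimum, whereas the paper proves subadditivity with pooled capital $u_X+u_Y$ and approximates from above; these are equivalent under positive homogeneity. Your remark on $+\infty$ values for unbounded $X$ is a harmless refinement.
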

\begin{proof}
Let $M_t^X = \sup_{0 \le s \le t} X_s$ and $M_t^Y = \sup_{0 \le s \le t} Y_s$ denote the respective running suprema for any processes $X, Y \in \mathcal{H}^0$.

First, we verify the finiteness condition. For a bounded process $X \in \mathcal{H}^\infty$, both its initial state $X_0$ and its running supremum $M_t^X$ are finite almost surely. Let $C \in \R$ be such that $M_t^X \le C$ almost surely. Because $X_0 \le M_t^X$ by definition (since $0 \in [0,t]$), it naturally holds that $C \ge X_0$. For any capital level $u \ge C$, the positive shortfall $(M_t^X - u)_+$ vanishes, yielding $\E_g[(M_t^X - u)_+] = 0$. Since $u \ge C \ge X_0$, the proportional margin $\delta(u - X_0)$ remains non-negative, rendering $u$ an acceptable capital level. Conversely, for any $u < X_0$, the proportional margin $\delta(u - X_0)$ becomes strictly negative. Since the expected positive shortfall $\E_g[(M_t^X - u)_+]$ is inherently non-negative, the acceptance condition cannot be met for $u < X_0$. Hence, the infimum over acceptable capital is bounded from below by $X_0$, ensuring $\rho_{\delta,g}^{(t)}(X) \in \R$.

Because $g$ is concave, the Choquet integral $\E_g[\cdot]$ is a subadditive operator satisfying monotonicity and positive homogeneity. We verify the remaining axioms of coherence:

\textit{Translation Invariance.} By definition, the risk measure for the translated process $X+c$ is given by $\inf \{ u \in \R \mid \E_g[(M_t^X + c - u)_+] \le \delta (u - (X_0 + c)) \}$. Applying the change of variables $v = u - c$, this is exactly equivalent to $\inf \{ v + c \in \R \mid \E_g[(M_t^X - v)_+] \le \delta (v - X_0) \}$, which evaluates to $\rho_{\delta,g}^{(t)}(X) + c$.

\textit{Monotonicity.} If $X_s\le Y_s$ for all $s$, then $M_t^X\le M_t^Y$ and $X_0\le Y_0$. If $u$ is acceptable for $Y$, then
\[
\E_g[(M_t^X-u)_+]\le \E_g[(M_t^Y-u)_+]
\le\delta(u-Y_0)\le\delta(u-X_0).
\]
Thus $u$ is acceptable for $X$, and taking infima gives $\rho_{\delta,g}^{(t)}(X)\le\rho_{\delta,g}^{(t)}(Y)$.

\textit{Subadditivity.} Let $\rho_X = \rho_{\delta,g}^{(t)}(X)$ and $\rho_Y = \rho_{\delta,g}^{(t)}(Y)$. Consider arbitrary acceptable capital levels $u_X > \rho_X$ and $u_Y > \rho_Y$. For the pooled capital $u = u_X + u_Y$, the subadditivity of the supremum and the positive part yields $(M_t^{X+Y} - u)_+ \le (M_t^X - u_X + M_t^Y - u_Y)_+ \le (M_t^X - u_X)_+ + (M_t^Y - u_Y)_+$. Applying the subadditivity of $\E_g[\cdot]$, we obtain:
\begin{align*}
\E_g\big[(M_t^{X+Y} - u)_+\big] &\le \E_g\big[(M_t^X - u_X)_+\big] + \E_g\big[(M_t^Y - u_Y)_+\big] \\
&\le \delta(u_X - X_0) + \delta(u_Y - Y_0) = \delta(u - (X_0 + Y_0)).
\end{align*}
Thus, $u$ is an acceptable capital level for $X+Y$, meaning $\rho_{\delta,g}^{(t)}(X+Y) \le u$. Taking the limit as $u_X \downarrow \rho_X$ and $u_Y \downarrow \rho_Y$ establishes subadditivity.

\textit{Positive Homogeneity.} For $\lambda > 0$, bounding $\E_g[(\lambda M_t^X - u)_+]$ by $\delta(u - \lambda X_0)$ is equivalent to bounding $\lambda \E_g[(M_t^X - u/\lambda)_+]$ by $\delta \lambda (u/\lambda - X_0)$. Dividing by $\lambda$ recovers the exact acceptance condition for $X$ with respect to the variable $u/\lambda$, which proves $\rho_{\delta,g}^{(t)}(\lambda X) = \lambda \rho_{\delta,g}^{(t)}(X)$. For $\lambda=0$, the identity holds trivially since $\rho_{\delta,g}^{(t)}(0) = 0$.
\end{proof}

\begin{bluerevision}
For the aggregate net-loss process $L$, $L_0=0$ by \eqref{eq:netloss}. The acceptance condition therefore compares $\mathcal D_g^{(t)}(u)$ with the proportional margin $\delta u$, giving
\end{bluerevision}
\begin{equation}\label{eq:implicit_netloss}
\rho_{\delta,g}^{(t)}(L) = \inf\left\{u \in \R \;\middle|\; \mathcal{D}_g^{(t)}(u) \le \delta u \right\}.
\end{equation}

By Lemma~\ref{lem:deficit_properties}, the deficit is continuous and nonincreasing, while $u\mapsto\delta u$ is strictly increasing. Since the left-hand side is nonnegative, no $u<0$ can be acceptable. Provided $\mathcal D_g^{(t)}(0)>0$, the minimum is the unique positive root $u^*$ satisfying
\begin{equation}\label{eq:implicit_root}
\mathcal{D}_g^{(t)}(u^*) = \delta u^*.
\end{equation}
The intersection balances the residual expected maximum deficit against the prescribed proportional margin $\delta u^*$. The economic meaning of $\delta$ must be specified by the user of the model; mathematically, the proportional form is what yields positive homogeneity. Furthermore, as previously observed, consistency with respect to the increasing convex order naturally carries over.

\begin{Theorem}[Consolidated increasing-convex-order consistency]\label{thm:consolidated_icx}
Let $M_t^{(1)}\licx M_t^{(2)}$ and let $g$ be concave. Then, simultaneously,
\begin{align*}
\mathcal D_{g,1}^{(t)}(u)&\le \mathcal D_{g,2}^{(t)}(u), &&u\in\R,\\
\rho_g^{(t)}(L^{(1)})&\le\rho_g^{(t)}(L^{(2)}),\\
\tilde\rho_{A,g}^{(t)}(L^{(1)})&\le\tilde\rho_{A,g}^{(t)}(L^{(2)}), &&A>0,\\
\rho_{\delta,g}^{(t)}(L^{(1)})&\le\rho_{\delta,g}^{(t)}(L^{(2)}), &&\delta>0.
\end{align*}
Under the compound-Poisson assumptions of Proposition~\ref{prop:lefevreetal}, all four conclusions follow from $Y^{(1)}\licx Y^{(2)}$.
\end{Theorem}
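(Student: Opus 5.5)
The plan is to derive all four inequalities from the first one, the pointwise deficit comparison. That comparison is exactly Proposition~\ref{prop:distorted_lefevreetal}. The remaining three risk-measure values are all functionals of the map $u\mapsto\mathcal D_g^{(t)}(u)$ that are monotone with respect to the pointwise order on such maps, so we only need to write each one in that form.

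First, take $\mathcal D_{g,1}^{(t)}\le\mathcal D_{g,2}^{(t)}$ on $\R$ from Proposition~\ref{prop:distorted_lefevreetal}.

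Second, for $\rho_g^{(t)}$ use \eqref{eq:risk_zero}, which gives $\rho_g^{(t)}(L^{(j)})=\mathcal D_{g,j}^{(t)}(0)$. Here we use that $L_0^{(j)}=0$, so $M_t^{(j)}\ge0$ and the negative part of the Choquet integral vanishes. Evaluating the first inequality at $u=0$ then gives the second.

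Third, for $\tilde\rho_{A,g}^{(t)}$ use the generalized-inverse representation \eqref{eq:inverse_operator_distorted}. If $\mathcal D_{g,2}^{(t)}(u)\le A$, then $\mathcal D_{g,1}^{(t)}(u)\le A$, so the set of acceptable levels for $L^{(2)}$ is contained in that for $L^{(1)}$. The infimum over the larger set is smaller, which gives the third inequality.

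Fourth, for $\rho_{\delta,g}^{(t)}$ use \eqref{eq:implicit_netloss}. The acceptance set is $\{u:\mathcal D_{g,j}^{(t)}(u)\le\delta u\}$, and by the same argument it is larger for $j=1$. So $\rho_{\delta,g}^{(t)}(L^{(1)})\le\rho_{\delta,g}^{(t)}(L^{(2)})$.

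Finally, for the compound-Poisson clause, recall that the proof of Proposition~\ref{prop:lefevreetal} cites \citet{LefevreEtAl2017} for $M_t^{(1)}\licx M_t^{(2)}$ when $\lambda_1\le\lambda_2$, $c_1\ge c_2$ and $Y^{(1)}\licx Y^{(2)}$. This is exactly the hypothesis of the present theorem, so the four conclusions follow as above.

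The only delicate point is finiteness. The inequalities must be read in $\R\cup\{+\infty\}$, and the set-inclusion argument uses that each deficit function is well defined. If $\mathcal D_{g,2}^{(t)}(0)=\infty$, the second-moment-type conditions may fail, but the monotonicity argument for the infima still holds with the conventions $\inf\emptyset=+\infty$. Also, the representations \eqref{eq:inverse_operator_distorted} and \eqref{eq:implicit_netloss} are just the definitions \eqref{eq:mincap} and \eqref{eq:coherent_implicit} specialized to $X=L$ with $L_0=0$. For that reason no continuity or uniqueness-of-root argument (as in \eqref{eq:implicit_root}) is needed: the comparison works at the level of acceptance sets. No step is a real obstacle; the main care is stating that monotonicity of infima, not strict inversion, is used.
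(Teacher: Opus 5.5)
Your proposal is correct and follows the paper's proof essentially step for step. It uses the deficit comparison from Proposition~\ref{prop:distorted_lefevreetal}, evaluation at $u=0$ for $\rho_g^{(t)}$, the inclusions $\{u:\mathcal D_{g,2}^{(t)}(u)\le A\}\subseteq\{u:\mathcal D_{g,1}^{(t)}(u)\le A\}$ and likewise with $\delta u$ (valid because both processes start at $L_0=0$), and \citet{LefevreEtAl2017} for the compound-Poisson clause. Your remark on infinite values is sound with $\inf\emptyset=+\infty$, since $\mathcal D_{g,2}^{(t)}(0)=\infty$ forces $\mathcal D_{g,2}^{(t)}\equiv\infty$ by the increment bound; the phrase about ``second-moment-type conditions'' is irrelevant and can be dropped.
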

\begin{proof}
\begin{bluerevision}
The deficit inequality is Proposition~\ref{prop:distorted_lefevreetal}. Evaluation at $u=0$ gives $\rho_g^{(t)}(L^{(1)})\le\rho_g^{(t)}(L^{(2)})$. For both $\tilde\rho_{A,g}^{(t)}$ and $\rho_{\delta,g}^{(t)}$, every capital level acceptable for process~2 is acceptable for process~1; inclusion of the corresponding acceptance sets gives the two generalized-inverse inequalities. The compound-Poisson conclusion follows from Proposition~\ref{prop:lefevreetal}.
\end{bluerevision}
\end{proof}
\begin{bluerevision}
We now give the condition under which $\rho_{\delta,g}^{(t)}(L)$ exceeds the zero-reserve running-maximum value $\rho_g^{(t)}(L)$.
\end{bluerevision}

\begin{bluerevision}
\begin{Proposition}[Critical threshold]
\label{prop:critical_threshold}
Assume $0<u_c=\rho_g^{(t)}(L)<\infty$ and $\mathcal D_g^{(t)}(u_c)>0$. Then there exists a critical proportional margin $\delta^*>0$ such that, for $\delta<\delta^*$, the proportional-tolerance value exceeds the zero-reserve running-maximum value:
\begin{equation}
\rho_{\delta,g}^{(t)}(L) > \rho_g^{(t)}(L).
\end{equation}
The threshold is the ratio of the residual distorted EMD at $u_c$ to $u_c$:
\begin{equation}\label{eq:delta_star}
\delta^* = \frac{\mathcal{D}_g^{(t)}(u_c)}{u_c}.
\end{equation}
\end{Proposition}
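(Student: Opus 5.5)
We work directly with the representation \eqref{eq:implicit_netloss}, $\rho_{\delta,g}^{(t)}(L)=\inf\{u\in\R:\mathcal D_g^{(t)}(u)\le\delta u\}$, and compare the acceptance set with the point $u_c=\rho_g^{(t)}(L)=\mathcal D_g^{(t)}(0)$. Define $h_\delta(u)=\mathcal D_g^{(t)}(u)-\delta u$. By Lemma~\ref{lem:deficit_properties}, $\mathcal D_g^{(t)}$ is continuous and nonincreasing, so $h_\delta$ is continuous and strictly decreasing. Hence the acceptance set $\{u: h_\delta(u)\le 0\}$ is a closed half-line $[u^*,\infty)$, where $u^*=\rho_{\delta,g}^{(t)}(L)$ is the unique root of \eqref{eq:implicit_root}. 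It exists because $h_\delta(0)=u_c>0$ and $h_\delta(u)\to-\infty$.

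The key equivalence is that $u^*>u_c$ if and only if $u_c$ is not acceptable, i.e.\ $h_\delta(u_c)>0$. Indeed, if $u_c\ge u^*$ then $u_c$ lies in the half-line and $h_\delta(u_c)\le0$. Conversely, if $h_\delta(u_c)>0$, strict decrease of $h_\delta$ forces every $u\le u_c$ to satisfy $h_\delta(u)>0$, so $u^*>u_c$. Now $h_\delta(u_c)>0$ reads $\mathcal D_g^{(t)}(u_c)>\delta u_c$, that is $\delta<\mathcal D_g^{(t)}(u_c)/u_c=\delta^*$. The hypotheses $u_c\in(0,\infty)$ and $\mathcal D_g^{(t)}(u_c)>0$ guarantee $0<\delta^*<\infty$, so the threshold in \eqref{eq:delta_star} is well defined and positive. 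For every $\delta\in(0,\delta^*)$ we then obtain $\rho_{\delta,g}^{(t)}(L)>\rho_g^{(t)}(L)$. As a by-product, the converse also holds: for $\delta\ge\delta^*$ the inequality fails. This shows $\delta^*$ is exactly critical, not merely sufficient.

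The only delicate step is ensuring the acceptance set is an interval whose infimum is attained. This requires continuity of $\mathcal D_g^{(t)}$, which follows from finiteness of $\mathcal D_g^{(t)}(0)=u_c$ via the Lipschitz bound in Lemma~\ref{lem:deficit_properties}. It also uses strict monotonicity of $\delta u$ rather than of the deficit, since the deficit itself need not be strictly decreasing. Once this is in place, the rest is a one-line sign comparison at $u_c$.
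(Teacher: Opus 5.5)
Your proposal is correct and follows essentially the same route as the paper: non-acceptability of $u_c$ exactly when $\delta<\delta^*=\mathcal D_g^{(t)}(u_c)/u_c$, combined with continuity of the deficit and strict monotonicity of $u\mapsto\delta u-\mathcal D_g^{(t)}(u)$, which forces the unique root beyond $u_c$. Your explicit half-line description of the acceptance set and the converse for $\delta\ge\delta^*$ spell out what the paper's phrase ``holds exactly when'' leaves implicit.
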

\end{bluerevision}

\begin{proof}
At $u_c$, the proportional acceptance inequality holds exactly when $\delta\ge\mathcal D_g^{(t)}(u_c)/u_c$. If $\delta<\delta^*$, $u_c$ is not acceptable. Continuity of the deficit and strict increase of $u\mapsto\delta u-\mathcal D_g^{(t)}(u)$ then force the unique root to satisfy $u^*>u_c$. The positivity assumptions exclude cases such as a deterministic maximum, for which the residual deficit at $u_c$ can vanish and $\delta^*=0$.
\end{proof}

\begin{Example}\label{ex:prop_exp}
Following Example~\ref{ex:coherent_exp}, we obtain a closed-form expression for this proportional coherent risk measure under an infinite-horizon exponential distribution with proportional hazard distortion $g(x)=x^k$. Evaluating the intersection yields:
\[
\frac{(1-\mu R)^k}{kR} e^{-kRu^*} = \delta u^*.
\]
Solving for $u^*$ involves the Lambert $W$ function. Multiplying both sides by $kR e^{kRu^*}$ gives $kR u^* e^{kRu^*} = \frac{(1-\mu R)^k}{\delta}$. Since this takes the form $x e^x = y$, the proportional capital is:
\begin{equation}\label{eq:exp_lambert}
\rho_{\delta,g}^{(\infty)}(L) = \frac{1}{k R} W_0\left( \frac{(1-\mu R)^k}{\delta} \right),
\end{equation}
where $W_0(\cdot)$ denotes the principal branch of the Lambert $W$ function. For the undistorted case ($k=1$), this simplifies to $\rho_\delta^{(\infty)}(L) = \frac{1}{R} W_0\left( \frac{1-\mu R}{\delta} \right)$. This expression demonstrates how the proportional measure balances exponential severity decay ($1/R$) with a logarithmic dampening dependent on the tolerance limit $\delta$.

{To determine the critical threshold $\delta^*$ from Proposition~\ref{prop:critical_threshold} for a fixed $\mu > 0$, we evaluate the infimum $\inf_{R > 0} \delta^*(R)$ across valid infinite-horizon exponential risk profiles $R \in (0, 1/\mu)$. From Example~\ref{ex:coherent_exp}, applying the proportional hazard (PH) distortion $g(x) = x^k$ yields the coherent capital $u_c = \frac{(1-\mu R)^k}{kR}$. The residual deficit evaluates to $\mathcal{D}_g^{(\infty)}(u_c) = u_c e^{-(1-\mu R)^k}$. The ratio provides the threshold:
\begin{equation}
\delta^*_{\mathrm{PH}}(R) = e^{-(1-\mu R)^k}.
\end{equation}
Taking the limit as $R \downarrow 0$ yields the infimum:
\begin{equation}
\inf_{R > 0} \delta^*_{\mathrm{PH}}(R) = e^{-1}.
\end{equation}}
\end{Example}

\section{Optimal reserve allocation}
\label{sec:allocation}

\begin{liquidityrevision}
Let $L^{(1)},\ldots,L^{(K)}$ be line-specific loss processes and let a fixed total reserve $u>0$ be split as $u_k\ge0$, $\sum_{k=1}^K u_k=u$. Write
\[
R_s^{(k)}=u_k-L_s^{(k)},\qquad
M_t^{(k)}=\sup_{0\le s\le t}L_s^{(k)},\qquad
\psi_k(v,t)=\PP(M_t^{(k)}>v),
\]
and $\mathcal D_{g_k,k}^{(t)}(u_k)=\int_{u_k}^\infty g_k(\psi_k(v,t))\dd v$. We study two optimization objectives: the sum of marginal deficits, $\Jmarg$, and the expected largest residual line deficit, $\Jworst$. These are functions of the allocation vector, not monetary risk measures on a process.

\begin{bluerevision}
For the line reserves $R_s^{(k)}$, multivariate ruin theory distinguishes
\begin{align*}
E_{\rm OR}(\boldsymbol u)&=\bigcup_{k=1}^K\{M_t^{(k)}>u_k\},
&E_{\rm AND}(\boldsymbol u)&=\bigcap_{k=1}^K\{M_t^{(k)}>u_k\},\\
E_{\rm sim}(\boldsymbol u)&=\{\exists s\le t:R_s^{(k)}<0\ \text{for every }k\},
&E_{\rm pool}(u)&=\left\{\exists s\le t:\sum_{k=1}^K R_s^{(k)}<0\right\}.
\end{align*}
The AND event permits different breach times, whereas simultaneous ruin requires a common time. OR-based allocation appears in \citet{Loisel2004}, while \citet{GongEtAl2012} consider both OR and AND criteria; \citet{DelsingEtAl2019} study adaptive allocations subject to subset-ruin constraints. For a fixed total reserve $u$, $E_{\rm pool}(u)$ is independent of the split. As a criterion for failure of the pooled portfolio, $E_{\rm OR}$ may be too conservative when an isolated breach can be absorbed elsewhere, while $E_{\rm AND}$ may be too permissive because pooled ruin need not require every unit to breach.

The objectives below replace a single binary threshold by residual-severity criteria.

Method~1 integrates the marginal exceedance probabilities separately. It does not assume independence and avoids estimation of the joint law, but it cannot respond to cross-unit co-occurrence or dependence when the marginal laws are held fixed. Method~2 is a severity-integrated OR criterion: it integrates distorted OR probabilities over all additional deficit levels. It uses the joint law and is therefore sensitive both to dependence and to its modelling. Neither method measures pooled or simultaneous ruin. Both take the total reserve as exogenous and therefore differ from Euler or Aumann--Shapley allocation of a single aggregate risk measure \citep{Tsanakas2009}; Method~1 is related in allocation structure, though not in temporal aggregation, to integrated-negative-surplus allocations \citep{Loisel2005,LoiselTrufin2014}. Section~\ref{sec:implementation} gives their operational scope.
\end{bluerevision}
\end{liquidityrevision}

\subsection{Method 1: Allocation based on marginal distorted deficits}

The first objective is
$$\Jmarg(u_1,\ldots,u_K) = \sum_{k=1}^K \mathcal{D}_{g_k, k}^{(t)}(u_k),$$
subject to $\sum_{k=1}^K u_k=u$. It depends only on the marginal laws.

\begin{Proposition}\label{prop:allocation1}
Assume that for $k=1,\ldots,K$, the survival function $\psi_k(\cdot,t): [0,\infty)\to [0,1]$ is continuously differentiable and strictly decreasing. Further assume the distortion functions $g_k: [0,1] \to [0,1]$ are continuously differentiable and strictly increasing. For total reserve $u>0$, the objective is strictly convex and its unique optimal allocation $(u_1^*,\ldots,u_K^*)$ has a non-empty active subset $J\subseteq \{1,\ldots,K\}$ such that
\begin{equation}\label{eq:kkt_distorted}
g_k\big(\psi_k(u_k^*,t)\big) = g_j\big(\psi_j(u_j^*,t)\big),\quad \forall\, k,j\in J,
\end{equation}
and for all inactive lines $k \notin J$ and active lines $j \in J$, the boundary condition $g_k\big(\psi_k(0,t)\big) \le g_j\big(\psi_j(u_j^*,t)\big)$ holds with $u_k^*=0$.
\end{Proposition}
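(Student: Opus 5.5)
The plan is to treat the problem as minimizing a separable, continuously differentiable, strictly convex function over the simplex $\Delta_u=\{\boldsymbol u\in[0,\infty)^K:\sum_k u_k=u\}$, and then read off the Karush--Kuhn--Tucker conditions. Throughout I will take $\mathcal D_{g_k,k}^{(t)}(0)<\infty$ for each $k$, as in Lemma~\ref{lem:deficit_properties}. The statement does not list this explicitly, but the objective is not finite without it, so I will state it as an implicit standing assumption.

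\textbf{Step 1 (regularity and strict convexity).} Under the hypotheses, $v\mapsto g_k(\psi_k(v,t))$ is continuous on $[0,\infty)$. The fundamental theorem of calculus then upgrades the almost-everywhere derivative \eqref{eq:deficit_derivative} to a genuine $C^1$ derivative on $[0,\infty)$, one-sided at $0$:
\[
\frac{\dd}{\dd u_k}\mathcal D_{g_k,k}^{(t)}(u_k)=-g_k\bigl(\psi_k(u_k,t)\bigr).
\]
Since $\psi_k(\cdot,t)$ is strictly decreasing and $g_k$ is strictly increasing, this derivative is strictly increasing in $u_k$. A $C^1$ function with strictly increasing derivative is strictly convex, so each summand is strictly convex. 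A sum of strictly convex functions of distinct coordinates is strictly convex on $\R^K_{\ge0}$, hence on $\Delta_u$.

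\textbf{Step 2 (existence and uniqueness).} The set $\Delta_u$ is nonempty, convex and compact. $\Jmarg$ is continuous on it by Lemma~\ref{lem:deficit_properties}, so a minimizer exists. Strict convexity on a convex set gives uniqueness: if two distinct minimizers existed, their midpoint would have a strictly smaller value.

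\textbf{Step 3 (KKT characterization).} All constraints are linear, so no further constraint qualification is needed, and for a convex $C^1$ problem the KKT conditions are both necessary and sufficient. Introduce a multiplier $\lambda$ for $\sum_k u_k=u$ and multipliers $\nu_k\ge0$ for $u_k\ge0$, with complementary slackness $\nu_k u_k^*=0$. Stationarity reads
\[
-g_k\bigl(\psi_k(u_k^*,t)\bigr)+\lambda-\nu_k=0 .
\]
Define $J=\{k:u_k^*>0\}$. $J$ is nonempty because $u>0$ and $\sum_k u_k^*=u$. For $k\in J$ we have $\nu_k=0$, so $g_k(\psi_k(u_k^*,t))=\lambda$, which gives the equalization condition \eqref{eq:kkt_distorted}. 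For $k\notin J$ we have $u_k^*=0$ and $g_k(\psi_k(0,t))=\lambda-\nu_k\le\lambda=g_j(\psi_j(u_j^*,t))$ for any $j\in J$, which is the boundary condition. This step could also be done by hand, without invoking KKT, using a pairwise exchange argument. Moving $\varepsilon$ of reserve from an active line $j$ to a line $k$ changes the objective at first order by $\varepsilon\bigl(g_j(\psi_j(u_j^*,t))-g_k(\psi_k(u_k^*,t))\bigr)$. This must be $\ge0$ at the optimum, and must be $=0$ when both lines are active, since $\varepsilon$ can then take either sign.

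\textbf{Main obstacle.} The delicate part is the boundary behaviour in Step 1, specifically differentiability and the strict monotonicity of the derivative at $u_k=0$. The one-sided derivative at $0$ is what enters the inactive-line inequality. The integral representation \eqref{eq:distorted_stoploss} on $[0,\infty)$ together with continuity of the integrand handles this. The remaining care point is that the finiteness of each $\mathcal D_{g_k,k}^{(t)}(0)$ has to be made explicit.
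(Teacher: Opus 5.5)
Your proposal is correct and follows the paper's proof essentially step for step. Strict monotonicity of the derivative $-g_k(\psi_k(u_k,t))$ gives strict convexity, compactness of the simplex gives existence, strict convexity gives uniqueness, and Lagrangian stationarity with complementary slackness yields the equalization and boundary conditions; your extra remarks (the tacit finiteness of each $\mathcal D_{g_k,k}^{(t)}(0)$, and handling the one-sided derivative at $u_k=0$ via the exchange argument) are sound refinements the paper leaves implicit.
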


\begin{proof}
\begin{liquidityrevision}
The line-\(k\) derivative is
\(-g_k(\psi_k(u_k,t))\), which is strictly increasing; hence the objective is strictly convex. Continuity on the compact feasible simplex gives existence, and strict convexity gives uniqueness. For the Lagrangian
\(\Jmarg-\lambda(\sum_k u_k-u)-\sum_k\mu_k u_k\), stationarity gives
\(g_k(\psi_k(u_k^*,t))=-\lambda-\mu_k\). Complementary slackness sets \(\mu_k=0\) on the active set, proving \eqref{eq:kkt_distorted}; at an inactive coordinate, \(u_k^*=0\) and \(\mu_k\ge0\), which gives the stated boundary inequality.
\end{liquidityrevision}
\end{proof}

Equivalently, the allocation is a water-filling rule. For a common marginal threshold $\eta$,
\begin{equation}\label{eq:water_filling}
u_k^*(\eta)=
\begin{cases}
(g_k\circ\psi_k)^{-1}(\eta),&g_k(\psi_k(0,t))>\eta,\\
0,&g_k(\psi_k(0,t))\le\eta,
\end{cases}
\end{equation}
where $\eta$ is chosen uniquely so that $\sum_{k=1}^K u_k^*(\eta)=u$. Thus Proposition~\ref{prop:allocation1} is computationally implementable by a one-dimensional search for $\eta$.

\begin{Remark}[The Invariance Property]\label{rem:invariance}
Proposition~\ref{prop:allocation1} reveals an important structural property. If a uniform, strictly increasing distortion function is enforced globally across the firm ($g_k = g$ for all $k$), applying the unique inverse $g^{-1}$ to both sides of equation \eqref{eq:kkt_distorted} yields $\psi_k(u_k^*,t) = \psi_j(u_j^*,t)$. 

\begin{bluerevision}
Hence the optimizer is invariant to the choice of a common strictly increasing distortion, although the minimized objective value may change. In this case the point allocation also equalizes the unit ruin probabilities at their allocated reserves among the active units. The integrated-deficit objective nevertheless records residual severity and supplies an objective profile for assessing near-optimal allocations; its role is not to claim a universally different split from probability balancing. Heterogeneous distortions generally change the allocation.
\end{bluerevision}
\end{Remark}

\subsection{Method 2: Worst-line maximum deficit}

\begin{liquidityrevision}
Let $\boldsymbol u=(u_1,\ldots,u_K)$ and $Z(\boldsymbol u)=\max_{1\le k\le K}(M_t^{(k)}-u_k)_+$. The second objective is the distorted expectation of this largest residual line deficit:
\begin{equation}\label{eq:worst_tail_representation}
\Jworst(u_1,\ldots,u_K)
=\E_g[Z(\boldsymbol u)]
=\int_0^\infty g\!\left(\PP\!\left(\bigcup_{k=1}^K
\{M_t^{(k)}>u_k+v\}\right)\right)\dd v,
\end{equation}
again subject to $u_k\ge0$ and $\sum_k u_k=u$. Unlike $\Jmarg$, this objective depends on the joint law of the running maxima.
For brevity, write $\tilde\psi_v(\boldsymbol u)=\PP(\bigcup_k\{M_t^{(k)}>u_k+v\})$.
\end{liquidityrevision}

\begin{bluerevision}
At $v=0$, the probability in \eqref{eq:worst_tail_representation} is the usual OR-ruin probability. Integration over $v$ distinguishes a shallow breach from a deep one, so $\Jworst$ is not merely an OR probability evaluated at one threshold.
\end{bluerevision}

\begin{Lemma}\label{lem:rho2_convex}
If $g$ is concave, then $\Jworst$ is convex in $(u_1,\ldots,u_K)$.
\end{Lemma}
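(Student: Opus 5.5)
The plan is to prove convexity pathwise first and then pass it through the distorted expectation. For each fixed $\omega$, the map
\[
\boldsymbol u\mapsto Z(\boldsymbol u)(\omega)=\max_{1\le k\le K}\bigl(M_t^{(k)}(\omega)-u_k\bigr)_+
\]
is convex on $\R^K$. Indeed, each $u_k\mapsto M_t^{(k)}(\omega)-u_k$ is affine, the positive part $x\mapsto x_+$ is convex and nondecreasing, and a pointwise maximum of convex functions is convex. Hence, for $\boldsymbol u,\boldsymbol w$ feasible and $\alpha\in[0,1]$, we have almost surely
\[
Z\bigl(\alpha\boldsymbol u+(1-\alpha)\boldsymbol w\bigr)\le \alpha Z(\boldsymbol u)+(1-\alpha)Z(\boldsymbol w).
\]
This is the same mechanism used in the convexity step of Proposition~\ref{prop:convex_axioms}, now applied coordinatewise.

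Next, apply $\E_g$ to both sides. I will use three properties of the Choquet integral with a concave distortion, already invoked in the proofs of Propositions~\ref{prop:coherentriskmeasure} and \ref{prop:convex_axioms}: monotonicity, positive homogeneity and subadditivity (see \citet{WirchHardy2002}). Monotonicity turns the almost sure inequality into
\[
\Jworst\bigl(\alpha\boldsymbol u+(1-\alpha)\boldsymbol w\bigr)\le \E_g\bigl[\alpha Z(\boldsymbol u)+(1-\alpha)Z(\boldsymbol w)\bigr].
\]
Subadditivity and positive homogeneity then bound the right-hand side by $\alpha\Jworst(\boldsymbol u)+(1-\alpha)\Jworst(\boldsymbol w)$. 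This is convexity on the feasible simplex, and in fact on all of $\R^K$. All variables involved are nonnegative, so the tail form \eqref{eq:worst_tail_representation} coincides with $\E_g[Z(\boldsymbol u)]$ and no negative-part issues arise.

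The main obstacle is integrability. If $\Jworst(\boldsymbol u)=+\infty$ at some point, the inequalities must be read in $[0,\infty]$. I would note that $Z(\boldsymbol u)\le\sum_k (M_t^{(k)}-u_k)_+$, so subadditivity gives
\[
\Jworst(\boldsymbol u)\le\sum_k \E_g\bigl[(M_t^{(k)}-u_k)_+\bigr].
\]
Under a concave $g$, each summand is finite whenever the unit deficits $\mathcal D_{g,k}^{(t)}(0)$ are finite, by Lemma~\ref{lem:deficit_properties}. I would state this as a standing assumption; otherwise the extended-valued inequality still holds trivially. Subadditivity of $\E_g$ for unbounded nonnegative variables also needs the comonotonic-additivity and concavity argument, which I take as known from the cited literature rather than reproving it.
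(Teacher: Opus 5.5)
Your proof is correct and follows the paper's argument. Both establish pathwise convexity of $\boldsymbol u\mapsto Z(\boldsymbol u)$ (the paper writes it as $\sup_{0\le s\le t}\max_k(L_s^{(k)}-u_k)_+$, which equals your $\max_k(M_t^{(k)}-u_k)_+$) and then pass it through the monotonicity, subadditivity and positive homogeneity of $\E_g$ for concave $g$. The paper also records the bound $|Z(\boldsymbol u)-Z(\boldsymbol w)|\le\|\boldsymbol u-\boldsymbol w\|_\infty$ to obtain continuity for the existence claim in Proposition~\ref{prop:allocation2_opt}, but convexity itself does not need it.
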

\begin{proof}
The argument of the expected value can be rewritten as $\sup_{0 \le s \le t} \max_{1 \le k \le K} (L_s^{(k)} - u_k)_+$. For any fixed realization of the claim paths, the mapping $u_k \mapsto (L_s^{(k)} - u_k)_+$ is convex. Since the maximum and supremum of convex functions are convex, the mapping $(u_1, \dots, u_K) \mapsto \sup_{0 \le s \le t}\max_{1 \le k \le K} (L_s^{(k)} - u_k)_+$ is pointwise convex. Finally, because the Choquet integral $\E_g[\cdot]$ generated by a concave distortion $g$ is a subadditive and positively homogeneous operator, it preserves the convexity of the functional. Therefore, $\Jworst$ is a convex function. Moreover, for allocation vectors $\boldsymbol u$ and $\boldsymbol w$,
\[
|Z(\boldsymbol u)-Z(\boldsymbol w)|\le\|\boldsymbol u-\boldsymbol w\|_\infty.
\]
Monotonicity and cash additivity of $\E_g$ consequently give
$|\Jworst(\boldsymbol u)-\Jworst(\boldsymbol w)|\le\|\boldsymbol u-\boldsymbol w\|_\infty$.
Thus the objective is continuous as well as convex, which supplies the continuity needed for the existence assertion below.
\end{proof}

\begin{Proposition}\label{prop:allocation2_opt}
Assume $u>0$, that $\boldsymbol u\mapsto\tilde\psi_v(\boldsymbol u)$ and $g$ are differentiable on the relevant ranges, and that differentiation under the improper integral is justified by an integrable dominating function. Define the positive marginal reserve benefit
\begin{equation}\label{eq:marginal_benefit}
b_k(\boldsymbol u)=-\frac{\partial\Jworst}{\partial u_k}(\boldsymbol u)
=\int_0^\infty g'\!\left(\tilde\psi_v(\boldsymbol u)\right)
\left[-D_k\tilde\psi_v(\boldsymbol u)\right]\dd v.
\end{equation}
An optimal allocation exists. For its non-empty active subset $J=\{k:u_k^*>0\}$,
\begin{equation}\label{eq:alloc2_opt}
b_k(\boldsymbol u^*)=b_j(\boldsymbol u^*)
\end{equation}
for all $k,j\in J$. For inactive $k\notin J$ and active $j\in J$,
\begin{equation}\label{eq:inactive_benefit}
b_k(\boldsymbol u^*)\le b_j(\boldsymbol u^*).
\end{equation}
Thus active lines have equal marginal reserve benefit, while an inactive line offers no larger benefit at the boundary.
\end{Proposition}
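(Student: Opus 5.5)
Existence comes first. The feasible set $\Delta_u=\{\boldsymbol u\in\R^K:u_k\ge0,\ \sum_k u_k=u\}$ is a nonempty compact simplex. Lemma~\ref{lem:rho2_convex} shows that $\Jworst$ is $1$-Lipschitz in $\|\cdot\|_\infty$; that part of its proof uses only monotonicity and cash additivity of $\E_g$, so it does not need concavity of $g$. Hence $\Jworst$ is continuous on $\Delta_u$ and attains its minimum at some $\boldsymbol u^*$. Because $u>0$ and $\sum_k u_k^*=u$, at least one coordinate is positive, so $J=\{k:u_k^*>0\}$ is non-empty. I will not claim uniqueness, since no strict convexity is assumed.

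Next I justify the formula \eqref{eq:marginal_benefit} for the partial derivatives. By the tail representation \eqref{eq:worst_tail_representation}, $\Jworst(\boldsymbol u)=\int_0^\infty g(\tilde\psi_v(\boldsymbol u))\dd v$. For fixed $v$, the chain rule gives $\partial_{u_k} g(\tilde\psi_v(\boldsymbol u))=g'(\tilde\psi_v(\boldsymbol u))\,D_k\tilde\psi_v(\boldsymbol u)$. The assumed integrable dominating function then allows differentiation under the improper integral. Since $\tilde\psi_v$ is nonincreasing in each $u_k$ and $g$ is nondecreasing, $D_k\tilde\psi_v\le0$ and $g'\ge0$, so $b_k\ge0$. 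This is why $b_k$ is called a positive marginal benefit.

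For the first-order conditions, I will not use Lagrange multipliers, because those need constraint qualifications and convexity for sufficiency. I will instead use feasible directional perturbations around the minimizer. Take $j\in J$ and any $k\ne j$, and consider $\boldsymbol u^*+\varepsilon(\mathbf e_k-\mathbf e_j)$. It stays in $\Delta_u$ for $0<\varepsilon\le u_j^*$. Optimality gives
\[
0\le\frac{d}{d\varepsilon}\Jworst\bigl(\boldsymbol u^*+\varepsilon(\mathbf e_k-\mathbf e_j)\bigr)\Big|_{\varepsilon=0^+}=-b_k(\boldsymbol u^*)+b_j(\boldsymbol u^*),
\]
that is, $b_k(\boldsymbol u^*)\le b_j(\boldsymbol u^*)$. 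This holds for every $k$, active or not, and it immediately gives \eqref{eq:inactive_benefit} for $k\notin J$. If $k$ is also in $J$, the roles of $k$ and $j$ can be swapped, which gives the reverse inequality and hence the equality \eqref{eq:alloc2_opt}.

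The main obstacle is the differentiability step. The one-sided directional derivative along $\mathbf e_k-\mathbf e_j$ equals $-b_k+b_j$ only if $\Jworst$ is differentiable at $\boldsymbol u^*$, or at least directionally differentiable with a derivative that is linear in the partials. I will get this from the stated hypotheses. Differentiability of $\boldsymbol u\mapsto\tilde\psi_v(\boldsymbol u)$ and of $g$ gives a full differential of each integrand. Dominated convergence applied to the difference quotients along an arbitrary direction $\boldsymbol h$ then gives $\nabla\Jworst(\boldsymbol u^*)\cdot\boldsymbol h=-\sum_k b_k h_k$.

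At boundary points where some $u_k^*=0$, only one-sided derivatives in the inward direction are available. The perturbation argument above only ever uses $\varepsilon\downarrow0$ in feasible directions, so this is enough. When $g$ is concave, Lemma~\ref{lem:rho2_convex} also makes these necessary conditions sufficient. I will mention that only as a remark, since the proposition asserts necessity only.
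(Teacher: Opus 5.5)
Your proposal is correct, but it reaches the first-order conditions by a different route from the paper.

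\emph{The paper's route.} It cites Lemma~\ref{lem:rho2_convex} together with compactness for existence. It then writes the Lagrangian $\Jworst-\lambda(\sum_k u_k-u)-\sum_k\mu_k u_k$ and reads off $b_k=-\lambda-\mu_k$, with $\mu_k=0$ on active coordinates and $\mu_k\ge0$ on inactive ones. It also remarks that convexity makes these KKT conditions sufficient.

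\emph{Your route.} You compare one-sided directional derivatives at the minimizer along the pairwise exchange directions $\mathbf e_k-\mathbf e_j$ with $j\in J$. This proves exactly what the proposition asserts, namely necessity, with no multipliers, constraint qualification or convexity.

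\emph{What your route gains.} Your observation that existence needs only the Lipschitz half of Lemma~\ref{lem:rho2_convex} is substantive. Monotonicity and cash additivity of $\E_g$ hold for every distortion, and the proposition does not assume $g$ concave. The paper, by contrast, cites a lemma stated under concavity and mentions sufficiency, which also needs concavity. The necessity it actually uses is left implicit; it holds because the constraints are linear. You also make explicit a step both arguments need: the derivative of $\Jworst$ along $\mathbf e_k-\mathbf e_j$ equals $b_j-b_k$. This requires dominated convergence along non-coordinate directions, not only for the partial derivatives.

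\emph{What the paper's route gains.} The Lagrangian formulation identifies the common active benefit as the shadow price $-\lambda$ of the budget. Once $g$ is concave, it also gives the converse: any feasible allocation satisfying the equal-benefit and boundary conditions is optimal. That is what makes the conditions usable as a certificate or computational criterion. You treat this only as a remark, which is appropriate for the statement as written.
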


\begin{proof}
\begin{liquidityrevision}
Lemma~\ref{lem:rho2_convex} and compactness of the feasible simplex give existence, while convexity makes the KKT conditions sufficient. The domination assumption permits differentiation under the integral, and \eqref{eq:marginal_benefit} gives \(\partial_k\Jworst=-b_k\). Stationarity for
\(\Jworst-\lambda(\sum_k u_k-u)-\sum_k\mu_k u_k\) therefore makes \(b_k=-\lambda\) on active coordinates. On an inactive coordinate, complementary slackness gives \(\mu_k\ge0\) and \(b_k=-\lambda-\mu_k\), which proves \eqref{eq:inactive_benefit}.
\end{liquidityrevision}
\end{proof}

The differentiable statement is a convenient special case. The principal examples $g(x)=\sqrt{x}$ and TVaR are not continuously differentiable on all of $[0,1]$; the same optimality conditions then hold with convex subgradients and one-sided marginal benefits.

\begin{liquidityrevision}
For \(K\) allocation units and a fixed reserve \(U\), let
\[
\begin{aligned}
\Delta_{K-1}
&=\left\{\boldsymbol q\in[0,1]^K:
\sum_{k=1}^Kq_k=1\right\},\\
\Phi_{1,U}(\boldsymbol q)
&=\Jmarg(Uq_1,\ldots,Uq_K),
&
\Phi_{2,U}(\boldsymbol q)
&=\Jworst(Uq_1,\ldots,Uq_K),
\end{aligned}
\]
and 

\[
\Phi_{m,U}^{\star}=\min_{\boldsymbol q\in\Delta_{K-1}}
\Phi_{m,U}(\boldsymbol q),
\qquad m=1,2.
\]
For \(\ep\ge0\) and \(0<\Phi_{m,U}^{\star}<\infty\), define the relative
\(\ep\)-optimal set by
\begin{equation}\label{eq:near_optimal_set}
\mathcal A_{m,\ep}^{(K)}(U)
=\left\{\boldsymbol q\in\Delta_{K-1}:
\Phi_{m,U}(\boldsymbol q)
\le(1+\ep)\Phi_{m,U}^{\star}\right\},
\qquad m=1,2.
\end{equation}
Under the convexity conditions above, this is a non-empty compact convex subset of the simplex. It is an objective-profile stability diagnostic conditional on the specified model, not a confidence region. Its coordinate projections record the range of each share attained somewhere in the near-optimal set; their Cartesian product generally contains allocations that are not near-optimal.

For \(K=2\), identify \(\boldsymbol q=(q,1-q)\) with \(q\in[0,1]\), write
\[
J_{1,U}(x)=\Jmarg(x,U-x),
\qquad
J_{2,U}(x)=\Jworst(x,U-x),
\qquad
J_{m,U}^{\star}=\min_{0\le x\le U}J_{m,U}(x),
\]
and abbreviate \(\mathcal A_{m,\ep}^{(2)}(U)\) to
\(\mathcal A_{m,\ep}(U)\). This set is a non-empty closed interval. If the
minimizer \(x_m^*\) is unique and interior and
\(J_{m,U}''(x_m^*)=\kappa_m>0\), a Taylor expansion gives
\[
\operatorname{diam}\mathcal A_{m,\ep}(U)
=\frac{2}{U}\sqrt{\frac{2\ep J_{m,U}^{\star}}{\kappa_m}}
+o(\sqrt{\ep}),\qquad \ep\downarrow0.
\]
The tolerance \(\ep\) is chosen externally; when \(J_{m,U}^{\star}=0\), an additive tolerance may be used instead.
\end{liquidityrevision}

\subsubsection*{Independent portfolios}

For dependent portfolios, the union probability requires the joint law of the line maxima. Mutual independence supplies a tractable benchmark but excludes common shocks.

\begin{Proposition}[Ruin probability of the worst-line reserve]\label{prop:indep_ruin}
Assume the $K$ business lines are mutually independent. The union probability of the minimum reserve process translated by $v$ is
\begin{equation}\label{eq:agg_min_indep}
\tilde\psi_v(\boldsymbol u) = 1 - \prod_{k=1}^K \Big(1 - \psi_k(u_k + v; t)\Big).
\end{equation}
\end{Proposition}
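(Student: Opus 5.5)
The plan is to pass to the complement of the union and use independence across lines. By definition,
\[
\tilde\psi_v(\boldsymbol u)=\PP\Bigl(\bigcup_{k=1}^K\{M_t^{(k)}>u_k+v\}\Bigr)
=1-\PP\Bigl(\bigcap_{k=1}^K\{M_t^{(k)}\le u_k+v\}\Bigr)
\]
by De Morgan's law. On this complement, every line reserve shifted by $v$, namely $u_k+v-L_s^{(k)}$, stays nonnegative throughout $[0,t]$. Equivalently, the minimum over $k$ of the shifted reserve processes never drops below zero, which is what justifies the phrase ``minimum reserve process'' in the statement.

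The next step is to make precise what ``mutually independent business lines'' means. I take it to mean that the processes $L^{(1)},\ldots,L^{(K)}$ are mutually independent as random elements of the path space. Each running maximum $M_t^{(k)}=\sup_{0\le s\le t}L_s^{(k)}$ is a measurable functional of the path of $L^{(k)}$ alone. Because the paths are c\`adl\`ag, this supremum can be computed over rational times in $[0,t]$ together with $t$, so it is a genuine random variable. Measurable functions of independent random elements are independent, so $M_t^{(1)},\ldots,M_t^{(K)}$ are mutually independent.

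The probability of the intersection then factorises into
\[
\prod_{k=1}^K\PP\bigl(M_t^{(k)}\le u_k+v\bigr)=\prod_{k=1}^K\bigl(1-\psi_k(u_k+v;t)\bigr),
\]
using $\psi_k(w;t)=\PP(M_t^{(k)}>w)$. Substituting this into the complement identity gives \eqref{eq:agg_min_indep}.

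The only step that needs real care is the second one: checking that $M_t^{(k)}$ is measurable and that independence of the processes carries over to their suprema. The rational-time reduction handles both. The rest is elementary, and the argument holds for every $v\ge0$ and every allocation $\boldsymbol u$, so the identity can be plugged directly into \eqref{eq:worst_tail_representation}.
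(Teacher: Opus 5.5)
Your proposal is correct and matches the paper's proof: take the complement $\{M_t^{(k)}\le u_k+v\text{ for every }k\}$ and factor its probability into the marginal distribution functions using independence. The rational-time argument for measurability of $M_t^{(k)}$, and for independence passing from the processes to their running maxima, is a sound extra detail that the paper leaves implicit.
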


\begin{proof}
By \eqref{eq:worst_tail_representation}, the complementary event is $\{M_t^{(k)}\le u_k+v\text{ for every }k\}$. Independence factors its probability into the product of the marginal distribution functions, which gives \eqref{eq:agg_min_indep}.
\end{proof}

\begin{liquidityrevision}
\begin{Example}[A uniquely selected exponential allocation]\label{ex:sharp_exponential}
Let the two maxima be independent with
\(\psi_k(v)=a_k e^{-bv}\) for \(v\ge0\), where \(a_k\in(0,1]\), and take the identity distortion. For a fixed total reserve \(U\), allocation \(x\) to line 1 gives
\begin{equation}\label{eq:exponential_profiles}
J_{1,U}(x)=\frac{a_1}{b}e^{-bx}+\frac{a_2}{b}e^{-b(U-x)},
\qquad
J_{2,U}(x)=J_{1,U}(x)-\frac{a_1a_2}{2b}e^{-bU}.
\end{equation}
Assume \(\lvert\log(a_1/a_2)\rvert<bU\). Strict convexity then gives both objectives the same unique interior allocation,
\begin{equation}\label{eq:exponential_unique_allocation}
x^*=\frac{U}{2}+\frac{1}{2b}\log\!\left(\frac{a_1}{a_2}\right),
\qquad q^*=\frac{x^*}{U}.
\end{equation}
It need not be an equal split: \(q^*\ne1/2\) whenever \(a_1\ne a_2\). Moreover,
\begin{equation}\label{eq:exponential_near_optimal_set}
\begin{aligned}
\frac{J_{1,U}(x^*+h)}{J_{1,U}(x^*)}
&=\cosh(bh), && x^*+h\in[0,U],\\
\mathcal A_{1,\ep}(U)
&=\left[q^*-\frac{\operatorname{arcosh}(1+\ep)}{bU},
q^*+\frac{\operatorname{arcosh}(1+\ep)}{bU}\right]\cap[0,1].
\end{aligned}
\end{equation}
Because the second term in \(J_{2,U}\) is constant in \(x\),
\(J_{2,U}-J_{2,U}^{\star}=J_{1,U}-J_{1,U}^{\star}\); since
\(J_{2,U}^{\star}<J_{1,U}^{\star}\), one also has
\(\mathcal A_{2,\ep}(U)\subseteq\mathcal A_{1,\ep}(U)\).
Thus both objectives uniquely select a possibly unequal allocation. For fixed
\(a_1/a_2\) and \(\ep>0\), both near-optimal sets become arbitrarily narrow as
\(bU\to\infty\).
\end{Example}
\end{liquidityrevision}

\begin{Example}[A dependent benchmark]\label{ex:dependent_worst_line}
Let $M_k=mI_k$, where $I_k$ is Bernoulli with $\PP(I_k=1)=p$ and $\PP(I_1=I_2=1)=q$. Thus the marginals are fixed while $q$ controls dependence. For the identity distortion, $0\le u_1\le u_2\le m$, and $Z=\max\{(M_1-u_1)_+,(M_2-u_2)_+\}$,
\begin{equation}\label{eq:dependent_bernoulli}
\Jworst(u_1,u_2)=\E[Z]
=p(m-u_1)+(p-q)(m-u_2).
\end{equation}
The objective depends explicitly on $q$: under perfect comovement $q=p$, only the larger shortfall contributes, whereas under independence $q=p^2$ the second term is positive. By contrast, $\Jmarg=p(m-u_1)+p(m-u_2)$ is unchanged. This elementary common-event model demonstrates the dependence sensitivity of Method~2 without imposing independence.
\end{Example}

\begin{bluerevision}
The worst-line objective targets the largest residual line-level shortfall but can understate a pooled deficit formed by several moderate unit deficits. A separate pooled constraint may therefore be imposed when pooled adequacy matters. Such a constraint can check or determine the total reserve $u$; in the present additive model, a pooled-deficit term cannot select its split once $u$ is fixed because the term is then constant.
\end{bluerevision}

\begin{Remark}[Impact of the Distortion Function]
In the preceding independent example, the identity distortion gives closed forms. A concave distortion changes the weighting of the joint tail and generally requires numerical integration or a subgradient method as described after Proposition~\ref{prop:allocation2_opt}. The direction and size of the resulting reallocation depend on the full joint distribution and should not be asserted without calculation.
\end{Remark}

\begin{liquidityrevision}
\section{Actuarial interpretation and multicoverage allocation}
\label{sec:implementation}

\begin{bluerevision}
Sections~\ref{sec:deficit}--\ref{sec:allocation} define the reserve-indexed deficit function $\mathcal D_g^{(t)}$, the risk measures $\rho_g^{(t)}$, $\tilde\rho_{A,g}^{(t)}$ and $\rho_{\delta,g}^{(t)}$, and the allocation objectives used by Methods~1 and~2 without assigning them an accounting or regulatory meaning. Three specifications of the loss process are particularly natural. First, an underwriting process accumulates claims and expenses less earned premiums and recoveries; its maximum is the greatest cumulative underwriting net loss. Second, if \(N_s^{\rm econ}\) is economic net assets, the choice \(L_s=N_0^{\rm econ}-N_s^{\rm econ}\) makes \(M_t\) the greatest deterioration in that position. Third, a process of dated cash outflows less inflows makes \(M_t\) the largest cumulative cash-flow gap. The first interpretation is developed below because the available data support a simplified illustration of it. The other two require, respectively, joint asset--liability valuations or transaction-dated payments and receipts. In all three cases, EMD is an internal path-risk criterion; an additional management policy is needed before it can be interpreted as an accounting provision, regulatory capital requirement or liquidity holding.
\end{bluerevision}

\begin{bluerevision}
If an independently selected benchmark $u^\star>0$, with $\mathcal D_g^{(t)}(u^\star)>0$, is to be reproduced by both the fixed- and proportional-tolerance measures, set
\end{bluerevision}
\begin{equation}\label{eq:tolerance_calibration}
 A=\mathcal D_g^{(t)}(u^\star),
 \qquad
 \delta=\frac{\mathcal D_g^{(t)}(u^\star)}{u^\star}.
\end{equation}
\begin{bluerevision}
These identities calibrate $A$ and $\delta$ to an external decision at $u^\star$; they do not identify that decision or risk appetite from the loss data. The parameters should be fixed before out-of-sample assessment and varied in sensitivity analysis.
\end{bluerevision}

\subsection{Underwriting net loss and reserve-budget allocation}
\label{subsec:underwriting_application}

For line \(k\), consider the occurrence-based underwriting process
\begin{equation}\label{eq:underwriting_loss_path}
L_{k,s}^{\rm UW}
=C_{k,s}^{\rm occ}+E_{k,s}^{\rm UW}
-P_{k,s}^{\rm earned}-H_{k,s}^{\rm rec},
\qquad 0\le s\le t,
\end{equation}
where the terms are cumulative claim amounts attributed to occurrences, other underwriting costs, earned premiums and recognized recoveries on a consistent recognition basis. Its running maximum includes both adverse outcomes that persist to \(t\) and adverse intermediate outcomes that are later offset.

\begin{bluerevision}
In this application, $u_k$ is the share assigned to unit $k$ from a fixed internal underwriting-risk budget $U$. We retain the term \emph{reserve allocation} used in Section~\ref{sec:allocation}, but $U$ need not equal an accounting claims reserve or prescribed solvency capital. If resources are fully fungible and transfers are instantaneous, costless and unconstrained, the split cannot affect the pooled reserve process. It may still serve internal attribution or limit setting; a physical unit-protection interpretation instead requires ring-fencing, transfer delays, transfer costs or other managerial constraints. Method~1 allocates $U$ by balancing marginal reductions in the unit deficits. Method~2 targets the distorted expected largest residual unit deficit and therefore uses the joint predictive distribution. The present occurrence-based implementation requires unit identifiers, occurrence dates and severities, a premium-earning or other funding path and, for Method~2, a defensible dependence model. The public data used below supply dated, jointly observed loss components but not an observed premium or funding path.
\end{bluerevision}

\subsection{Monte Carlo estimation from predictive paths}
\label{subsec:monte_carlo}

Let \(L^{(1)},\ldots,L^{(B)}\) be paths simulated from a fitted predictive model conditional on the information available at the valuation date, and define
\begin{equation}\label{eq:simulated_maxima_estimator}
m_b=\sup_{0\le s\le t}L_s^{(b)},
\qquad
\widehat\psi_t(v)=\frac{1}{B}\sum_{b=1}^B\ind\{m_b>v\}.
\end{equation}
The historical observations fit or calibrate the model; they are not themselves the \(B\) predictive paths. Each simulation generates a possible future count, event times and severities, combines them with the modelled funding path, and yields one maximum. Thus a dataset is needed to construct the predictive law, while simulation supplies the repeated future paths that a short observed history cannot provide.

The plug-in estimator is
\begin{equation}\label{eq:empirical_emd_estimator}
\widehat{\mathcal D}_{g}^{(t)}(u)
=\int_u^\infty g\!\left(\widehat\psi_t(v)\right)\dd v.
\end{equation}
If \(m_{(1)}\le\cdots\le m_{(B)}\) are the ordered maxima and
\(r=\min\{j:m_{(j)}>u\}\), then, when this set is nonempty,
\begin{equation}\label{eq:empirical_emd_order_statistics}
\widehat{\mathcal D}_{g}^{(t)}(u)
=\sum_{j=r}^{B}
\bigl(m_{(j)}-m_{(j-1)}^{\circ}\bigr)
g\!\left(\frac{B-j+1}{B}\right),
\quad
m_{(r-1)}^{\circ}=u,
\quad m_{(j-1)}^{\circ}=m_{(j-1)}\quad(j>r),
\end{equation}
and the estimate is zero if \(u\ge m_{(B)}\). Event-driven simulations evaluate the supremum at every event time; a discretized model requires a sufficiently fine grid.
\begin{bluerevision}
The zero-reserve risk-measure estimate is $\widehat\rho_g^{(t)}=\widehat{\mathcal D}_{g}^{(t)}(0)$. Numerical inversion of $\widehat{\mathcal D}_{g}^{(t)}(u)=A$ and $\widehat{\mathcal D}_{g}^{(t)}(u)=\delta u$ gives the estimates of $\tilde\rho_{A,g}^{(t)}$ and $\rho_{\delta,g}^{(t)}$, respectively. For several allocation units, paths must be simulated jointly and retained scenario by scenario: Method~1 uses the marginal maxima, whereas Method~2 uses their joint vector. Repeated simulation batches assess Monte Carlo error; resampling or posterior predictive methods are needed for parameter uncertainty.
\end{bluerevision}

\subsection{A three-coverage allocation illustration with \texttt{danishmulti}}
\label{subsec:danishmulti}

The public \texttt{danishmulti} data contain \(2{,}167\) commercial-fire losses recorded by Copenhagen Reinsurance from 1980 to 1990 \citep{EmbrechtsEtAl1997,DutangCharpentier2026}. Amounts are expressed in millions of Danish kroner, adjusted to 1985 values, and each event is divided into Building, Contents and loss-of-Profits components. The observations have total loss of at least one million kroner. The three components are coverage-group allocation units affected by a common fire, not independent statutory business lines.

We use the \(1{,}714\) events in 1980--1988 for fitting and retain the \(453\) events in 1989--1990 for predictive diagnostics. The annual training counts have mean \(\widehat\lambda=190.4\) and variance \(930.3\). In a method-of-moments Poisson--gamma fit, \(N_b\mid\Lambda_b\) is Poisson with intensity \(\Lambda_b\), while \(\Lambda_b\) is gamma with shape \(49.0\) and scale \(\widehat\lambda/49.0\); its fitted coefficient of variation is \(0.143\). Given the count, complete records \((T_j,Y_{j,B},Y_{j,C},Y_{j,P})\) are sampled jointly from the training observations. This preserves the empirical occurrence-time pattern and the within-event dependence among the three loss components. It is a compound frequency--severity specification with vector-valued marks \citep{KlugmanEtAl2019,YanLuJeong2024}.

The data contain neither premiums nor exposures. We therefore use the deterministic expected-cost funding rate
\[
c_k=\widehat\lambda\,\overline Y_k,
\]
where \(\widehat\lambda\) is the mean annual count and \(\overline Y_k\) is the training-sample mean of component \(k\). This is an explicit modelling assumption for the recorded large-loss layer, not an observed premium. For predictive path \(b\),
\begin{equation}\label{eq:danish_loss_path}
L_{k,s}^{(b)}
=\sum_{j=1}^{N_b}Y_{j,k}^{(b)}\ind\{T_j^{(b)}\le s\}-c_ks,
\qquad 0\le s\le1.
\end{equation}
The component rates are \(344.0\), \(241.2\) and \(45.1\) million kroner per year. Their shares, \(54.6\%\), \(38.3\%\) and \(7.2\%\), provide an expected-cost-proportional comparator. Equation~\eqref{eq:danish_loss_path} is occurrence-based: expenses, recoveries, settlement delays and asset returns are outside the data and the fitted path.

\begin{bluerevision}
The fitted predictive paths estimate the reserve-indexed deficit function, but the data identify neither the institution's monetary tolerance $A$ nor its proportional tolerance $\delta$. We therefore use neither $\tilde\rho_{A,g}^{(1)}$ nor $\rho_{\delta,g}^{(1)}$ to select the illustrative budget. With the identity distortion and $B=50{,}000$ joint predictive paths, the zero-reserve pooled estimate $\widehat{\mathcal D}_{\rm pool}^{(1)}(0)$ is used solely as a common reference:
\[
U_{\rm ref}=\widehat{\mathcal D}_{\rm pool}^{(1)}(0)
=\frac1B\sum_{b=1}^{B}M_{{\rm pool}}^{(b)},
\qquad
M_{{\rm pool}}^{(b)}=\sup_{0\le s\le1}\sum_{k=1}^3L_{k,s}^{(b)}.
\]
The estimate is $U_{\rm ref}=81.12$ million kroner, with Monte Carlo standard error $0.47$. This baseline-model score provides a common, model-based monetary scale for comparing allocation profiles; it is neither observed capital nor an amount obtained from a regulatory formula or either tolerance-based measure. We keep it fixed in the sensitivity calculations to isolate changes in the split, so stability of allocation shares does not imply stability of the pooled capital level.
\end{bluerevision}
Table~\ref{tab:danish_allocation} gives each numerical minimizer at this budget and the coordinate projections of \(\mathcal A_{m,0.005}^{(3)}(U_{\rm ref})\). The \(0.5\%\) tolerance is chosen for the illustration. The projections are not confidence intervals, and their Cartesian product is not the near-optimal set.

\begin{table}[H]
\centering
\scriptsize
\setlength{\tabcolsep}{3pt}
\caption{Three-coverage \texttt{danishmulti} allocation at \(U_{\rm ref}\). Entries are percentages. Point allocations and projections are in Building/Contents/Profits order; the final pair is the objective increase at equal/expected-cost allocation.}
\label{tab:danish_allocation}
\begin{tabular*}{0.98\textwidth}{@{\extracolsep{\fill}}lccccc@{}}
\toprule
& \shortstack{Point allocation\\B/C/P} & \shortstack{Building\\projection} & \shortstack{Contents\\projection} & \shortstack{Profits\\projection} & \shortstack{Comparator increase\\equal / expected cost}\\
\midrule
Method 1 & 41.4/47.3/11.3 & [33.8,49.0] & [39.9,54.9] & [7.2,16.5] & 7.1/1.7\\
Method 2 & 44.7/47.5/7.8  & [39.6,50.2] & [42.5,52.6] & [4.0,10.9] & 14.7/1.9\\
\bottomrule
\end{tabular*}
\end{table}

\begin{figure}[H]
\centering
\includegraphics[width=0.96\textwidth]{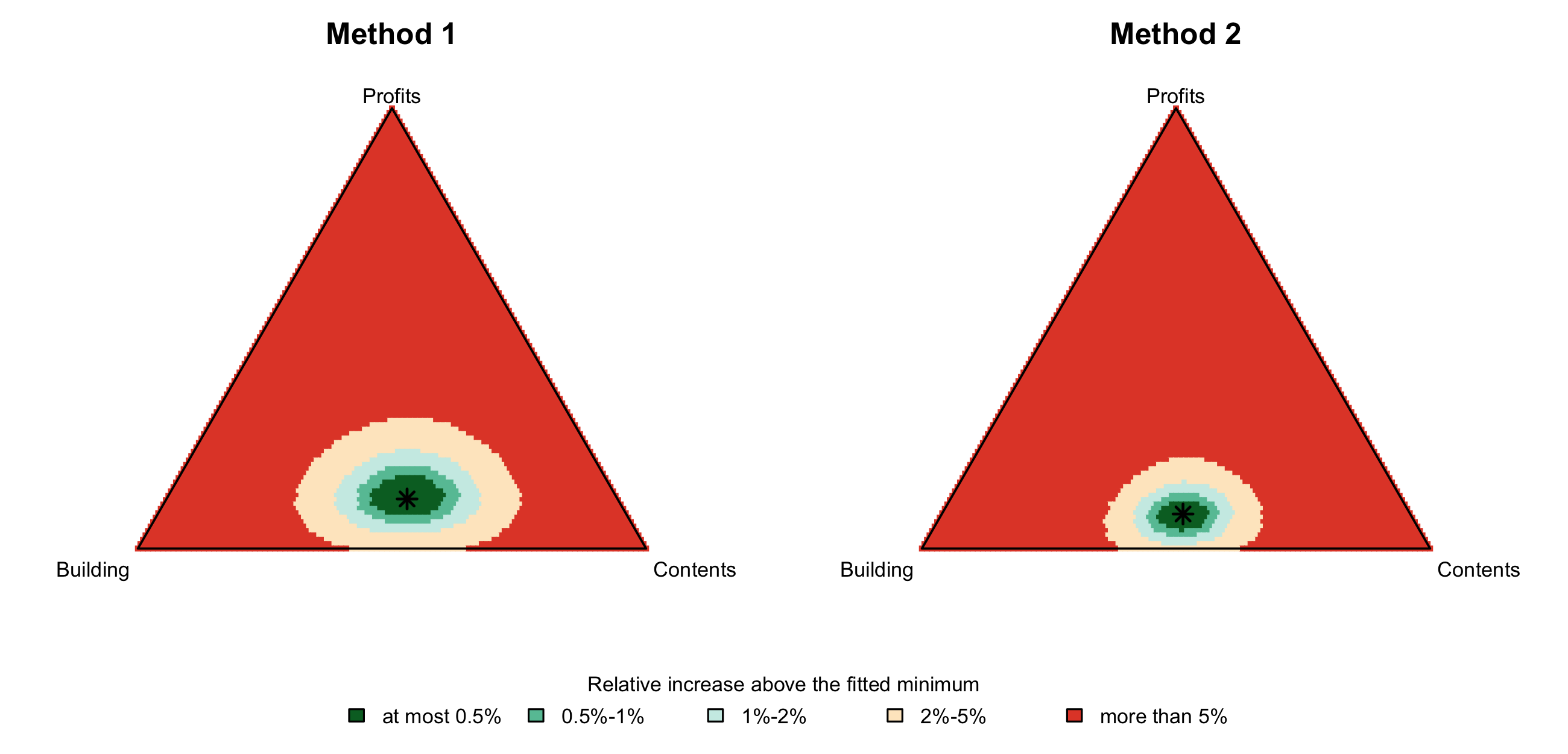}
\caption{Fitted allocation profiles on the three-coverage simplex. Colour records the relative increase above each fitted minimum, and the star marks the minimizing allocation. The dark-green region is the full 0.5\%-optimal set whose coordinate projections appear in Table~\ref{tab:danish_allocation}.}
\label{fig:danish_simplex}
\end{figure}

The figure and table show that neither fitted objective is flat over the simplex. Moving to equal allocation raises Method~1 by \(7.1\%\) and Method~2 by \(14.7\%\); the expected-cost-proportional comparator raises them by \(1.7\%\) and \(1.9\%\). The fitted model therefore distinguishes materially different allocations, while the near-optimal set displays the remaining local tolerance. Both point allocations are more balanced between Building and Contents than the expected-cost comparator. Relative to Method~1, Method~2 assigns a smaller Profits share, reflecting its fitted joint-maxima criterion rather than the sum of marginal deficits.

The two held-out years provide descriptive predictive diagnostics, not validation. Their observed counts, \(235\) and \(218\), lie inside the fitted 95\% predictive interval \([135,254]\). The observed pooled maxima are at the \(95.2\)nd and \(83.6\)th percentiles of the predictive distribution. In \(95.3\%\) of the fitted paths, the pooled running maximum exceeds the positive terminal loss, so the path criterion is not numerically reducible to a terminal-loss calculation in this example.

Sensitivity analysis separates identification under the fitted model from specification uncertainty. Replacing the mixed-Poisson count by a Poisson count, or empirical times by uniform times, changes an individual point share by at most \(3.6\) percentage points for Method~1 and \(5.6\) for Method~2. Omitting the single largest training event lowers the pooled EMD from \(81.1\) to \(69.0\) million kroner but, at the fixed budget \(U_{\rm ref}\), changes every allocation share by less than \(1.9\) percentage points. Increasing every funding rate by \(5\%\) moves the Method~2 allocation to \(40.7/49.9/9.4\). A marginal-preserving independence stress gives \(43.9/48.4/7.8\), whereas an intentionally severe reverse-rank coupling of Profits gives \(37.0/40.4/22.6\). The last coupling is a stress, not a fitted dependence model. These comparisons show that a locally informative objective profile does not eliminate tail, funding or dependence model risk \citep{Cairns2000,ASOP56}.

This remains an illustration rather than an operational validation. The data are historical, thresholded large losses; they contain occurrence rather than payment dates and only nine fitting years. Empirical resampling preserves observed multivariate loss records but cannot extrapolate beyond their observed tail; extreme-value or other tail models would require separate validation \citep{McNeil1997,YanLuJeong2024}. A real implementation would also require exposure or premium information and an application-specific funding path. The accompanying script \texttt{danishmulti\_emd\_allocation.R} reproduces the paths, allocation table, simplex figure and diagnostics.
\end{liquidityrevision}

\section{Conditional risk measures and review-date allocation}
\label{sec:dynamic}

\begin{bluerevision}
This section first defines the conditional running-maximum risk measure and proves a supermartingale inequality for it. It then defines the conditional fixed- and proportional-tolerance measures, shows by counterexample that the fixed-tolerance process need not be a supermartingale, establishes expected adequacy for an amount fixed at an earlier date, and recomputes the two reserve-allocation criteria at review dates.
\end{bluerevision}

\subsection{Conditional risk measures over a fixed horizon}
\label{subsec:timeconsistency}

For an integrable random variable $Z$, define the conditional distortion operator by
\begin{equation}\label{eq:conditional_distortion_operator}
\mathcal E_{g,s}(Z)
=\int_0^\infty g\!\left(\PP(Z>x\mid\mathcal F_s)\right)\dd x
+\int_{-\infty}^0\left[g\!\left(\PP(Z>x\mid\mathcal F_s)\right)-1\right]\dd x.
\end{equation}
The regular-conditional-distribution assumption made in Section~\ref{sec:deficit} ensures that the conditional survival functions can be chosen jointly measurable. All statements below assume the displayed conditional integrals are finite, so that the processes are integrable rather than merely extended-valued.

\vspace{3mm}
\noindent\textbf{The conditional running-maximum risk measure.}
\vspace{1mm}

\noindent For $0\le s\le t$, define $\rho_{g,s}^{(t)}(L)=\mathcal E_{g,s}(M_t)$. Since $M_t\ge0$, this becomes
\begin{equation}\label{eq:cond_risk}
\rho_{g,s}^{(t)}(L) = \int_0^\infty g\Big(\mathbb{P}\big(M_t > v \mid \mathcal{F}_s\big)\Big) \dd v.
\end{equation}

Nonlinear conditional distortion operators generally lack a recursive tower property. We therefore prove the supermartingale inequality in Proposition~\ref{prop:supermartingale}, without identifying it with strong time consistency.

\begin{Proposition}[Supermartingale inequality]
\label{prop:supermartingale}
Let $g$ be concave, not necessarily strictly concave, and assume $(\rho_{g,s}^{(t)}(L))_{0\le s\le t}$ is integrable. For $0\le s\le r\le t$,
\begin{equation}
\rho_{g,s}^{(t)}(L) \ge \mathbb{E}\big[\rho_{g,r}^{(t)}(L) \mid \mathcal{F}_s\big].
\end{equation}
\end{Proposition}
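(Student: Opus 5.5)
The plan is to work level by level in the tail representation \eqref{eq:cond_risk}. For each fixed $v\ge0$, write $p_s(v)=\PP(M_t>v\mid\mathcal F_s)$. By the tower property of ordinary conditional expectation, $(p_s(v))_{0\le s\le t}$ is a martingale:
\[
\E[p_r(v)\mid\mathcal F_s]=p_s(v),\qquad s\le r.
\]
The regular-conditional-distribution assumption lets us choose these versions jointly measurable in $(v,\omega)$, so that the integrals below are well defined as random variables.

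Next, apply the conditional Jensen inequality to the concave function $g$, which is bounded on $[0,1]$:
\[
\E\bigl[g(p_r(v))\mid\mathcal F_s\bigr]\le g\bigl(\E[p_r(v)\mid\mathcal F_s]\bigr)=g(p_s(v)).
\]
This holds almost surely for each fixed $v$. Concavity only, not strict concavity, is needed here, which matches the hypothesis. Because $g$ is concave on $[0,1]$ with $g(0)=0$ and nondecreasing, it is continuous on $(0,1]$. The only possible discontinuity is a jump at $0$. Jensen's inequality for concave functions on a closed interval still holds, since a concave function on $[0,1]$ lies below its chords and the usual supporting-line argument applies at interior points. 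If needed, I would treat the endpoints by noting that $g$ is the pointwise infimum of affine functions on $(0,1]$ and checking the value at $0$ separately; that value is $g(0)=0$, the smallest possible, so it only helps the inequality.

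Finally, integrate over $v\in[0,\infty)$ and exchange the conditional expectation with the $\dd v$ integral. This gives
\[
\E\bigl[\rho_{g,r}^{(t)}(L)\mid\mathcal F_s\bigr]=\int_0^\infty\E\bigl[g(p_r(v))\mid\mathcal F_s\bigr]\dd v\le\int_0^\infty g(p_s(v))\dd v=\rho_{g,s}^{(t)}(L).
\]
The exchange is a conditional Fubini--Tonelli step. It is justified because the integrand $g(p_r(v))$ is nonnegative and jointly measurable. Tonelli therefore applies without any domination, and integrability of $\rho_{g,r}^{(t)}(L)$ guarantees the results are finite.

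I expect the main obstacle to be this measurability and null-set issue. Jensen's inequality gives a null set that depends on $v$, and uncountably many such sets must not be combined naively. I would handle it by proving the integrated inequality directly. For any $A\in\mathcal F_s$, compute $\E[\ind_A\rho_{g,r}^{(t)}(L)]$ by ordinary Tonelli on the product $\Omega\times[0,\infty)$, then apply Jensen for each $v$ inside the $\dd v$-integral. Since $A$ is arbitrary, this yields the conditional inequality almost surely without any uncountable union of exceptional sets.
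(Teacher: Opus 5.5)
Your proof is correct and is essentially the paper's own argument: interchange the conditional expectation and the $\dd v$-integral by Tonelli, apply conditional Jensen to $g$ at each level $v$ using $\E[p_r(v)\mid\mathcal F_s]=p_s(v)$, and integrate back. Your test-set version with $A\in\mathcal F_s$ handles the $v$-dependent null sets rigorously, which the paper leaves implicit.

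One side remark needs two corrections. First, a concave function lies \emph{above} its chords and below its supporting lines. Second, the possible jump of $g$ at $0$ is harmless for a different reason than the one you give. On $\{p_s(v)=0\}$, the nonnegative $p_r(v)$ has zero conditional mean and therefore vanishes there, so both sides of the Jensen inequality equal $0$.
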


\begin{proof}
By definition, the conditional risk measure evaluated at a future time $r \in [s, t]$ is:
\begin{equation*}
\rho_{g,r}^{(t)}(L) = \int_0^\infty g\Big(\mathbb{P}\big(M_t > v \mid \mathcal{F}_r\big)\Big) \dd v.
\end{equation*}
Taking the conditional expectation with respect to the current filtration $\mathcal{F}_s$ and applying the Fubini-Tonelli theorem to interchange the integration and the expectation operators yields:
\begin{equation*}
\mathbb{E}\big[\rho_{g,r}^{(t)}(L) \mid \mathcal{F}_s\big] = \int_0^\infty \mathbb{E}\left[ g\Big(\mathbb{P}\big(M_t > v \mid \mathcal{F}_r\big)\Big) \;\Bigg|\; \mathcal{F}_s \right] \dd v.
\end{equation*}
Concavity of $g$ is sufficient. Applying conditional Jensen's inequality and the tower property ($\mathcal{F}_s\subseteq\mathcal{F}_r$) gives
\begin{align*}
\mathbb{E}\left[ g\Big(\mathbb{P}\big(M_t > v \mid \mathcal{F}_r\big)\Big) \;\Bigg|\; \mathcal{F}_s \right] &\le g\left( \mathbb{E}\Big[ \mathbb{P}\big(M_t > v \mid \mathcal{F}_r\big) \;\Big|\; \mathcal{F}_s \Big] \right) \\
&= g\Big(\mathbb{P}\big(M_t > v \mid \mathcal{F}_s\big)\Big).
\end{align*}
Integrating this bound over the threshold domain $[0, \infty)$ reconstructs the conditional risk measure at time $s$:
\begin{equation*}
\mathbb{E}\big[\rho_{g,r}^{(t)}(L) \mid \mathcal{F}_s\big] \le \int_0^\infty g\Big(\mathbb{P}\big(M_t > v \mid \mathcal{F}_s\big)\Big) \dd v = \rho_{g,s}^{(t)}(L).
\end{equation*}
This establishes the claimed inequality. Identity and TVaR distortions show why strict concavity must not be imposed.
\end{proof}

\vspace{3mm}
\noindent\textbf{The conditional fixed- and proportional-tolerance risk measures.}
\vspace{1mm}

For $U\in L^0(\mathcal F_s)$, define the conditional deficit at an $\mathcal F_s$-measurable amount by
\begin{equation}\label{eq:conditional_deficit}
\mathcal D_{g,s}^{(t)}(U)
=\int_{-\infty}^{\infty}\ind_{\{U\le v\}}
g\!\left(\PP(M_t>v\mid\mathcal F_s)\right)\dd v.
\end{equation}
For deterministic $U=u$, this reduces to the integral from $u$ to infinity. The conditional fixed- and proportional-tolerance risk measures are $\mathcal F_s$-measurable generalized inverses defined by essential infima:
\begin{align}
\tilde\rho_{A,g,s}^{(t)}(L)
&=\operatorname*{ess\,inf}\left\{U\in L^0(\mathcal F_s):
\mathcal D_{g,s}^{(t)}(U)\le A\right\},\label{eq:conditional_fixed_essinf}\\
\rho_{\delta,g,s}^{(t)}(L)
&=\operatorname*{ess\,inf}\left\{U\in L^0(\mathcal F_s):
\mathcal D_{g,s}^{(t)}(U)\le\delta U\right\}.\label{eq:conditional_prop_essinf}
\end{align}

\begin{bluerevision}
\begin{Proposition}[The conditional fixed-tolerance risk-measure process need not be a supermartingale]\label{prop:root_counterexample}
Even for $g(x)=x$, $(\tilde\rho_{A,g,s}^{(t)}(L))_{0\le s\le t}$ need not satisfy a supermartingale inequality.
\end{Proposition}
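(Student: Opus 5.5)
The plan is an explicit two-date counterexample with $g(x)=x$. The idea is that the conditional fixed-tolerance value is a nonlinear (generalized-inverse) function of the conditional law of $M_t$. If this function has a convex kink at the point where the inversion switches from the positive-reserve branch to the negative-capital branch of \eqref{eq:negative_linear}, then averaging over information revealed at an intermediate date $r$ can raise the expected value above the time-$0$ value.

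\textbf{Model.} Fix $0<r<t$. Take an $\mathcal F_r$-measurable Bernoulli variable $\xi$ with $\PP(\xi=1)=1/2$, and set $L_s=0$ for $s<r_1$, where $r<r_1<t$. At time $r_1$ the process jumps to $m>0$ with conditional probability $\pi(\xi)$ given $\mathcal F_r$, and stays there; otherwise it stays at $0$. Then $M_t=mI$ with $I\in\{0,1\}$. Since $\mathcal F_0$ is trivial, the unconditional jump probability is $p=(\pi(0)+\pi(1))/2$. The process is c\`adl\`ag, adapted and bounded, so all quantities are finite.

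\textbf{Step 1: the map $\varphi$.} For a two-point law $M=mI$ with $\PP(I=1)=q$, the deficit is
\[
\mathcal D(u)=\begin{cases} qm-u, & u<0,\\ q(m-u), & 0\le u\le m.\end{cases}
\]
Inverting at level $A$ gives the value
\[
\varphi(q)=\begin{cases} qm-A, & q\le A/m,\\ m-A/q, & q\ge A/m,\end{cases}
\]
which is the analogue of \eqref{eq:exp_convex}.

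\textbf{Step 2: the convex kink.} At $q=A/m$ both branches equal $0$. The left slope there is $m$, while the right slope is $A/q^2=m^2/A$, which is strictly larger than $m$ when $A<m$. So $\varphi$ is convex across the kink, and Jensen's inequality can reverse direction.

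\textbf{Step 3: conditional values.} Given $\mathcal F_r$, the conditional law of $M_t$ is again two-point with parameter $\pi(\xi)$. Since $\pi(\xi)$ is $\mathcal F_r$-measurable and the conditional deficit \eqref{eq:conditional_deficit} is computed path by path from the regular conditional distribution, the essential infimum in \eqref{eq:conditional_fixed_essinf} equals $\varphi(\pi(\xi))$. The candidate $U=\varphi(\pi(\xi))$ is $\mathcal F_r$-measurable and acceptable. It is also minimal, because on each atom $\{\xi=i\}$ any smaller acceptable $U$ would contradict the deterministic inversion.

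\textbf{Step 4: numbers.} Choose $\pi(0)=0$ and $\pi(1)=2A/m$ with $m>2A$, so that $p=A/m$. Then
\[
\tilde\rho_{A,g,0}^{(t)}(L)=\varphi(A/m)=0,
\]
while
\[
\E\bigl[\tilde\rho_{A,g,r}^{(t)}(L)\bigr]=\tfrac12(-A)+\tfrac12\Bigl(m-\frac{m}{2}\Bigr)=\tfrac12\Bigl(\frac{m}{2}-A\Bigr)>0.
\]
For example, $m=4$ and $A=1$ give expectation $1/2>0=\tilde\rho_{A,g,0}^{(t)}(L)$. This violates the supermartingale inequality between $s=0$ and $r$.

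\textbf{Main obstacle.} The delicate point is Step 3: the conditional essential-infimum definition must reduce atom by atom to the deterministic generalized inverse, including on the negative-capital branch where $U<0$. I would handle this by the finite partition $\{\xi=0\},\{\xi=1\}$ of $\mathcal F_r$. On each atom the conditional survival function is deterministic, so the conditional deficit of an $\mathcal F_r$-measurable $U$ is the deterministic $\mathcal D$ evaluated at $U$, and the ess\,inf is attained by the atomwise inverses. I would also remark that the failure comes precisely from the linear negative branch \eqref{eq:negative_linear}. This contrasts with Proposition~\ref{prop:supermartingale}, where no inversion occurs.
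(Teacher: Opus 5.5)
Your proposal is correct and follows essentially the paper's route. Like the paper, you use an explicit two-branch counterexample with $g(x)=x$, with the branch revealed at an intermediate date and one branch falling on the negative-capital side of the generalized inverse; your conditional mean $\tfrac12$ at $r$ (for $m=4$, $A=1$) exceeds the earlier value $0$, just as the paper gets $2.625>0.75$ with $A=1.5$, and your atomwise reduction of the essential infimum makes explicit a step the paper leaves implicit. One caveat on your closing remark: the negative branch is essential only within your Bernoulli family. With $A=1$, a branch where $M_t=2$ surely and a branch where $M_t=20$ with probability $0.1$ (else $0$) give time-$r$ values $1$ and $10$, but the mixture value is $20/11<5.5$, with all values positive; this matches the paper's attribution of the failure to the generalized inverse itself.
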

\end{bluerevision}
\begin{proof}
\begin{bluerevision}
Take $A=1.5$, let $s$ precede the revelation of two equally likely branches, and let $r>s$ be the information date at which the branch is known. In branch 1, $M$ equals $0$ and $1$ with probabilities $0.25$ and $0.75$; in branch 2, $M$ equals $0$ and $12$ with probabilities $0.75$ and $0.25$. The two possible values of $\tilde\rho_{A,g,r}^{(t)}(L)$, obtained from $\E[(M-u)_+]=1.5$, are
\[
u_1=-0.75,\qquad u_2=6,
\]
so $\E[\tilde\rho_{A,g,r}^{(t)}(L)\mid\mathcal F_s]=2.625$. Before the branch is revealed,
\[
\PP(M=0)=0.5,\qquad\PP(M=1)=0.375,\qquad\PP(M=12)=0.125,
\]
and the current value is $\tilde\rho_{A,g,s}^{(t)}(L)=u_0=0.75$. Hence
\[
\tilde\rho_{A,g,s}^{(t)}(L)=0.75
<2.625
=\E[\tilde\rho_{A,g,r}^{(t)}(L)\mid\mathcal F_s],
\]
which contradicts the proposed supermartingale inequality. The failure arises from taking the essential infimum in \eqref{eq:conditional_fixed_essinf}, not from the conditional deficit itself.
\end{bluerevision}
\end{proof}

\begin{Proposition}[Expected adequacy of an amount fixed at an earlier date]\label{prop:frozen_capital}
Let $0\le s\le r\le t$ and let $U_s\in L^0(\mathcal F_s)$ satisfy the required integrability conditions. Then
\begin{equation}\label{eq:frozen_capital}
\E\!\left[\mathcal D_{g,r}^{(t)}(U_s)\mid\mathcal F_s\right]
\le \mathcal D_{g,s}^{(t)}(U_s).
\end{equation}
Consequently, if $U_s$ is acceptable under \eqref{eq:conditional_fixed_essinf}, the expected future deficit at the amount fixed at time $s$ remains at most $A$; if it is acceptable under \eqref{eq:conditional_prop_essinf}, it remains at most $\delta U_s$.
\end{Proposition}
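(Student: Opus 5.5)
The proof will follow the supermartingale argument of Proposition~\ref{prop:supermartingale}, applied pointwise in the threshold $v$ but now with the indicator $\ind_{\{U_s\le v\}}$ inside the integrand. The key observation is that $U_s$ is $\mathcal F_s$-measurable, hence also $\mathcal F_r$-measurable, so this indicator can be pulled out of the conditional expectation given $\mathcal F_s$. Concretely, I would start from \eqref{eq:conditional_deficit} at time $r$,
\[
\mathcal D_{g,r}^{(t)}(U_s)=\int_{-\infty}^{\infty}\ind_{\{U_s\le v\}}\,g\!\left(\PP(M_t>v\mid\mathcal F_r)\right)\dd v.
\]
Then I would take $\E[\,\cdot\mid\mathcal F_s]$ and interchange it with the $\dd v$ integral by conditional Fubini--Tonelli. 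The integrand is nonnegative, and the joint measurability of the conditional survival functions, guaranteed by the regular-conditional-distribution assumption, makes this legitimate.

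After the interchange, for each fixed $v$ the factor $\ind_{\{U_s\le v\}}$ is $\mathcal F_s$-measurable and bounded, so
\[
\E\!\left[\ind_{\{U_s\le v\}}\,g(\PP(M_t>v\mid\mathcal F_r))\mid\mathcal F_s\right]=\ind_{\{U_s\le v\}}\,\E\!\left[g(\PP(M_t>v\mid\mathcal F_r))\mid\mathcal F_s\right].
\]
Next, conditional Jensen's inequality for the concave $g$, combined with the tower property $\E[\PP(M_t>v\mid\mathcal F_r)\mid\mathcal F_s]=\PP(M_t>v\mid\mathcal F_s)$, bounds the last factor by $g(\PP(M_t>v\mid\mathcal F_s))$. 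Integrating over $v$ gives \eqref{eq:frozen_capital}.

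The two consequences then follow immediately by composing with the acceptance inequality. If $\mathcal D_{g,s}^{(t)}(U_s)\le A$, the conditional expectation of the future deficit is at most $A$. If $\mathcal D_{g,s}^{(t)}(U_s)\le\delta U_s$, it is at most $\delta U_s$, which is already $\mathcal F_s$-measurable.

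The main obstacle is technical rather than conceptual. Jensen's inequality and the factorization must hold for almost every $v$ simultaneously, with a single null set. The clean way to arrange this is to treat everything on the product space $\Omega\times\R$ with a jointly measurable version of $(\omega,v)\mapsto\PP(M_t>v\mid\mathcal F_r)(\omega)$. Then the Jensen inequality holds $\PP\otimes\dd v$-almost everywhere, and Tonelli transfers it to the integrals. If this is delicate, the fallback is to prove the inequality against arbitrary $B\in\mathcal F_s$, by comparing $\E[\ind_B\,\mathcal D_{g,r}^{(t)}(U_s)]$ with $\E[\ind_B\,\mathcal D_{g,s}^{(t)}(U_s)]$ via unconditional Tonelli, which avoids conditional Fubini altogether.

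One further point needs care. Concavity on $[0,1]$ with $g(0)=0$ and $g(1)=1$ need not give continuity at $0$, but conditional Jensen's inequality for a concave function on a compact interval still holds for bounded arguments, so no extra regularity is required.
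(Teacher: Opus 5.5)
Your proof is correct and follows the paper's argument: conditional Tonelli to interchange expectation and the $\dd v$ integral, pulling the $\mathcal F_s$-measurable indicator $\ind_{\{U_s\le v\}}$ outside the conditional expectation, then conditional Jensen for concave $g$ combined with the tower property, and integrating over $v$. Your extra remarks, which the paper leaves implicit, are sound: the $v$-dependent null sets can be handled through a jointly measurable version or by testing against $B\in\mathcal F_s$, and Jensen's inequality still holds for a concave $g$ that may jump at $0$.
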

\begin{proof}
Write $p_q(v)=\PP(M_t>v\mid\mathcal F_q)$. Conditional Tonelli, conditional Jensen and the tower property give
\begin{align*}
\E[\mathcal D_{g,r}^{(t)}(U_s)\mid\mathcal F_s]
&=\int_{-\infty}^{\infty}\ind_{\{U_s\le v\}}
\E[g(p_r(v))\mid\mathcal F_s]\dd v\\
&\le\int_{-\infty}^{\infty}\ind_{\{U_s\le v\}}g(p_s(v))\dd v
=\mathcal D_{g,s}^{(t)}(U_s).
\end{align*}
The indicator can be taken outside the conditional expectation precisely because $U_s$ is $\mathcal F_s$-measurable.
\end{proof}
We call \eqref{eq:frozen_capital} expected adequacy of an amount fixed at an earlier date; it is not asserted to be a recursive form of time consistency.

\subsection{Implementation at successive review dates}
\begin{liquidityrevision}
Let $0=t_0<t_1<\cdots$ be review dates and $\Delta_i=t_{i+1}-t_i$. For allocation unit $k$, define
\[
M_{k,i}^{(\Delta_i)}
=\sup_{0\le v\le\Delta_i}
\bigl(L_{t_i+v}^{(k)}-L_{t_i}^{(k)}\bigr),
\qquad
\psi_{k,i}(z)
=\PP\!\left(M_{k,i}^{(\Delta_i)}>z\mid\mathcal F_{t_i}\right).
\]
\begin{bluerevision}
$M_{k,i}^{(\Delta_i)}$ and $\psi_{k,i}$ describe only additional deterioration after $t_i$. Any deficit already realized by $t_i$ must enter the current state or a separate management constraint; it is not included in $M_{k,i}^{(\Delta_i)}$.
\end{bluerevision}

\begin{bluerevision}
\begin{Remark}[Terminal position and replenishment at a review date]\label{rem:terminal_replenishment}
Consider a review period $[0,t]$ during which a reserve amount $u\ge0$ is assigned at the outset and no capital is added or released before $t$. Under the cumulative-net-loss convention, the amount remaining immediately before review is $u-L_t$. If management then selects an $\mathcal F_t$-measurable target $U_t\ge0$, the signed adjustment needed to reach it is
\[
I_t=U_t-(u-L_t)=U_t-u+L_t.
\]
The amount injected is $(I_t)_+$, whereas $(-I_t)_+$ is released. If the target is restored to its initial level, $U_t=u$, then $I_t=L_t$. Thus $L_t>0$ produces an injection under that policy, although it does not mean that the original reserve has been exhausted; the pre-review position is negative only when $L_t>u$. If the target policy is specified at time~0, a premium principle $\Pi$ may be applied to $(I_t)_+$ to assign the ex ante funding amount $\Pi((I_t)_+)$. This is a separate management policy, not a missing component of EMD: $L_t$ is one of the values entering $M_t=\sup_{0\le s\le t}L_s$, so the two amounts should not be added unless they are combined through an explicit joint policy.
\end{Remark}
\end{bluerevision}

Given an exogenous $U_i\in L^0_+(\mathcal F_{t_i})$, choose $\mathcal F_{t_i}$-measurable allocations $u_{k,i}\ge0$ with $\sum_k u_{k,i}=U_i$ by solving, pointwise in the current information state,
\begin{equation}\label{eq:review_date_allocation}
\min_{u_{k,i}\ge0,\,\sum_k u_{k,i}=U_i}
\sum_{k=1}^K\int_{u_{k,i}}^\infty g_k(\psi_{k,i}(z))\dd z.
\end{equation}
When the regularity conditions of Proposition~\ref{prop:allocation1} hold almost surely, its KKT characterization applies pathwise. Method~2 instead uses
\[
\tilde\psi_{v,i}(\boldsymbol u_i)
=\PP\!\left(\bigcup_{k=1}^K
\{M_{k,i}^{(\Delta_i)}>u_{k,i}+v\}
\,\middle|\,\mathcal F_{t_i}\right)
\]
inside the conditional counterpart of \eqref{eq:worst_tail_representation}. \begin{bluerevision}Equation~\eqref{eq:review_date_allocation} and the conditional Method~2 program describe periodic re-estimation and reallocation of an exogenous budget; they are not an intertemporal control model. Translating a refreshed EMD value into underwriting resources, economic net-asset headroom or usable liquidity requires an application-specific management policy.\end{bluerevision}
\end{liquidityrevision}


\section*{Acknowledgements}
P. Zuyderhoff acknowledges the School of Mathematical Sciences and the Faculty of Science and Engineering at UNNC for their institutional support. The work of C. Lefèvre was conducted within the Research Chair DIALog under the aegis of the Risk Foundation, a joint initiative by Université Paris-Dauphine PSL and CNP Assurances.


\end{document}